\documentclass[11pt]{article}
\usepackage{fullpage,times,amsmath,amssymb,amsthm,hyperref,graphicx,xcolor}
\usepackage[T1]{fontenc}

\newtheorem{theorem}{Theorem}[section]
\newtheorem{lemma}[theorem]{Lemma}
\newtheorem{corollary}[theorem]{Corollary}
\newtheorem{problem}[theorem]{Problem}
 
\newcommand{\eps}{\varepsilon}
\newcommand{\OO}{\widetilde{O}}
\newcommand{\Ex}{\mathbb{E}}
\newcommand{\up}[1]{\left\lceil #1\right\rceil}
\newcommand{\down}[1]{\left\lfloor #1\right\rfloor}
\newcommand{\R}{\mathbb{R}}
\newcommand{\Z}{\mathbb{Z}}
\newcommand{\AAA}{\mathcal{A}}
\newcommand{\BBB}{\mathcal{B}}
\newcommand{\CCC}{\mathcal{C}}
\newcommand{\nnn}{\hat{n}}

\newcommand{\Thalf}{T_{\mbox{\scriptsize\rm halve}}}
\newcommand{\Tcount}{T_{\mbox{\scriptsize\rm count}}}

\newcommand{\bad}{\text{bad}}

\newcommand{\Pconv}{P_{\mbox{\scriptsize\rm subset-conv}}}
\newcommand{\Qconv}{Q_{\mbox{\scriptsize\rm subset-conv}}}

\newcommand{\Mselect}[2]{M_{\mbox{\scriptsize\rm selected-entries}}(#1\mid #2)}

\begin{document}

\title{3SUM Made Simpler (Without Matrix Multiplication),\\
$k$-SUM Hypothesis Refuted for Odd $k$,\\ and
Real 3SUM/APSP Derandomized}

\author{Timothy M. Chan\thanks{Siebel School of Computing and Data Science, University of Illinois at Urbana--Champaign, USA (\{tmc, itakura3, zm25, yyu51\}@illinois.edu)}\and
Reina Itakura$^\ast$ \and
Zhiyuan Ma$^\ast$ \and
Yuancheng Yu$^\ast$
}

\maketitle

\begin{abstract}
Recently, a truly subquadratic algorithm for the 3SUM problem and
a truly subcubic algorithm for the all-pairs shortest paths (APSP) problem
were found by Anthropic, as reported by Alman and Vassilevska Williams (2026). 
These algorithms relied on a new subroutine for 
rectangular matrix multiplication with selected output entries, which was obtained by modifying known matrix multiplication techniques of Sch\"onhage (1981) and Coppersmith (1982).
\begin{enumerate}
\item We present a different truly subquadratic randomized algorithm
for 3SUM for integer inputs.  It completely avoids the use of matrix multiplication but
instead modifies known dynamic programming algorithms for subset convolution and the zeta transform, which rely only on simple subtraction tricks via inclusion--exclusion (and does not even require fast Fourier transform), and so is conceptually easier to understand (the whole solution is about 5 pages long).
\item Following a similar approach, we obtain a randomized algorithm for $k$-SUM  for integer inputs for odd constant $k$ in $O(n^{\lceil k/2\rceil-\eps_k})$ time for some constant $\eps_k>0$, disproving the $k$-SUM Hypothesis for all odd $k$.
\item Existing reductions from 3SUM and APSP  to selected-entries rectangular matrix multiplication were randomized in the case of real-valued input.
We present new deterministic reductions, yielding
a truly subquadratic deterministic algorithm for Real 3SUM and a truly subcubic deterministic algorithm for Real APSP.
\end{enumerate}
\end{abstract}

\section{Introduction}

Recently, Anthropic found the first truly subcubic algorithm for the all-pairs shortest paths (APSP) problem
and the first truly subquadratic algorithm for the 3SUM problem, as announced and presented 
by Alman and Vassilevska Williams~\cite{AlmanV26}.
The key to the breakthrough lies in a faster algorithm
for the following matrix problem: computing $w$ user-selected (also called ``wanted'') entries of the
product of an $n\times d$ and $d\times n$ matrix, when $w$ is subquadratic.
Let $\Mselect{n,d,n}{w}$ denote its time complexity.
Rephrased in graph-theoretical terms, this rectangular matrix problem includes a lopsided version of the ``\#All-Edges (AE) Sparse Triangle'' problem:
we are given a tripartite graph where the left and right parts $V_L$ and $V_H$ have $n$ vertices and the
middle part $V_M$ has $d$ vertices, and there are $w$ edges between $V_L$ and $V_H$;
we want to count the number of triangles containing $e$, for each edge $e$ between $V_L$ and $V_R$.
Their main subroutine established, for any sufficiently small constants $q,\eps>0$,
the existence of some $\gamma=\gamma(q,\eps)$ such that\footnote{The $\OO$ notation hides polylogarithmic factors.}
\[ \Mselect{n,d,n}{w} \le \OO(n^2/d^\gamma + wd^q) \]
for all $d\le n^\eps$.
For example, $\Mselect{n,d,n}{w} \le 
\OO(n^2/d^{0.063} + wd^{0.437})$ for all $d\le n^{1/18}$. 

The new subroutine was obtained by modifying known matrix multiplication techniques of
Sch\"onhage~\cite{Schonhage81} and Coppersmith~\cite{Coppersmith82} to achieve output-sensitivity.
This is surprising, since matrix multiplication is usually used as a black box in algorithmic applications, and there were not many successful prior examples when modifying matrix multiplication algorithms helped.

Once this matrix problem has been solved, faster algorithms for both APSP and 3SUM---and, more strongly, the
Exact Triangle problem---follow more or less immediately, by using or adapting existing reductions from APSP/3SUM/Exact Triangle to
AE Sparse Triangle \cite{Patrascu10,KopelowitzPP16,WilliamsX20,ChanWX22,ChanX24,FischerKP24}.  There are multiple options, and some of these reductions already hold in the lopsided setting.
In the integer input case, deterministic reductions were known \cite{ChanX24,FischerKP24}; however,
in the real case, only randomized reductions were known \cite{ChanWX22}.

Several improvements have been found just days after the initial announcement of the breakthrough\footnote{For example,
see  \cite{Maj26,SahaY26} and \url{https://bigospeedrun.com/3sum}.
} 
but in this new era, conceptual understanding and better exposition are perhaps more important than small incremental improvements.
In this manuscript, we address two themes: 1.~simplifying these algorithms, to make the ideas more broadly accessible, and 2.~derandomizing
the algorithms in the real input case.

\paragraph{3SUM simplified.}
Shortly after the announcement of the breakthrough, Karingula and Lovett~\cite{KarL26} have already 
presented a simplification (found with the help of swarms of agents) 
of the main subroutine for selected-entries rectangular matrix multiplication (with slightly worse exponents).
The original explanation of the subroutine~\cite{AlmanV26}, requiring about 18 pages,
 has now been cut down to 7 pages (not counting the remaining reduction from 3SUM/APSP to lopsided AE Sparse Triangle, which would require at least 2 more pages).

While the usage of matrix multiplication tools for APSP is inevitable as APSP is equivalent to
$(\min,+)$-matrix multiplication,
an interesting question is whether further simplifications are possible for 3SUM (to the general public, 3SUM seems to attract even more attention than APSP).
Our intuition is this: 3SUM and the easier $(\min,+)$-convolution problem are about sequences of numbers,
whereas Exact Triangle and $(\min,+)$-matrix multiplication are their matrix counterparts (see Figure~\ref{fig:reduce}).
Sequence problems are more often solved using convolutions or fast Fourier transform (FFT) instead 
of matrix multiplication.  For example, known algorithms for the bounded monotone $(\min,+)$-convolution problem~\cite{ChanL15,ChiDXZ22} 
mostly used FFT, whereas known algorithms for bounded monotone $(\min,+)$-matrix multiplication~\cite{BringmannGSW19,ChiDXZ22} used
fast matrix multiplication.

A natural thought is to try to modify existing algorithms for standard convolution
such as FFT, Karatsuba's, and Toom and Cook's, which if doable might be
easier than modifying matrix multiplication algorithms such as Sch\"onhage's and Coppersmith's,
to those who are not experts in matrix multiplication machineries.  Even if possible, it still seems to require effort to achieve output-sensitivity: for all these recursive algorithms, it is nontrivial to
analyze which nodes of the recursion tree  are
necessary for the selected output entries (see all the figures needed in the descriptions by both
Alman and Vassilevska Williams~\cite{AlmanV26} and Karingula and Lovett~\cite{KarL26}).

We find a simpler way to obtain a truly subquadratic, Monte-Carlo randomized algorithm for 3SUM for polynomially bounded integer inputs. 
The key idea is to switch to a different but related problem, \emph{subset convolution} (see Problem~\ref{prob:subset:conv}).
Standard algorithms for subset convolution are based on the so-called \emph{zeta (or M\"obius) transform}, which
is solved by simple dynamic programming (DP), using just subtraction or inclusion--exclusion tricks,
and are simpler than FFT\@.  
Adapting these subset convolution algorithms to the selected-entries setting turns out to be easy, due to the simple DAG or lattice structure of the DP
(no messy analysis of dependency patterns of ``input encoding/preparation'' and ``output decoding/reconstruction'' in recursion trees~\cite{AlmanV26,KarL26}).
In fact, this observation was noted explicitly before by Bj{\"{o}}rklund, Husfeldt, Kaski, and Mikko Koivisto~\cite{BjorklundHKK10} (for a different application, concerning exponential-time algorithms for certain problems in bounded-degree graphs).\footnote{
Alman and Vassilevska Williams~\cite{AlmanV26} actually cited Bj{\"{o}}rklund, Husfeldt, Kaski, and Mikko Koivisto's work in a remark at the end of their Section 2
on selected-entries matrix multiplication: 
``The pruned recursion of Section 2.4.2 is an analog of FFT pruning~\cite{Markel71,SorensenB93}
of trimmed M\"obius inversion~\cite{BjorklundHKK10}\ldots''\ \ 
However, their paper did not pursue
further the usage of \cite{BjorklundHKK10} to solve 3SUM more directly and more simply.
}
For completeness, we include a description of the selected-entries subset convolution algorithm in Section~\ref{sec:subset:conv}, which takes just one and a half page.  
We believe this is the simplest way to introduce to a broader audience the possibility of output-sensitivity
for selected-entries versions of algebraic problems.\footnote{
We have found that an AI-generated (and Lean-verified) solution posted at \url{https://bigospeedrun.com/3sum} by {\tt c0ldzy17}
also appeared to use subset convolution, but 
there are no accompanying human-written manuscripts explaining the ideas. 
}

While this greatly simplifies the algebraic part of the solution, 
a bit more work is now required in the remaining part, to reduce 3SUM to subset convolution, since subset convolution 
is farther away from 3SUM than standard convolution.  Nevertheless, this part is in more familiar
territory to algorithm designers in fine-grained complexity, and indeed can be solved in under 3 pages using hashing techniques (see Section~\ref{sec:3sum:to:subset:conv}).  
Hashing is commonly used in prior work about convolutions and 3SUM-related reductions (e.g., see 
\cite{ColeH02,BaranDP08,Patrascu10,ChanH20,BringmannFN21,BringmannFN22}), but this reduction requires some new technical ideas (with a nontrivial analysis), as it does not seem to follow directly from any existing work.

\begin{figure}
\centering
\includegraphics[scale=0.8]{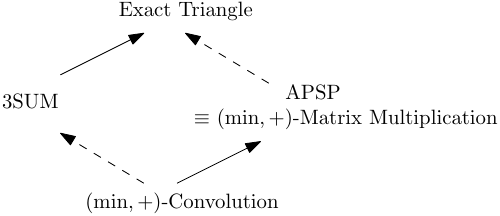}
\caption{Known simple reductions \cite{Patrascu10,BremnerCDEHILPT14,WilliamsW18,CyganMWW19}.  (The dashed ones are for integer input, unless one uses more complicated, recent techniques~\cite{fischer20263sumreallyhardrealtointeger}.)}\label{fig:reduce}
\end{figure}

The solution here raises some new interesting questions concerning whether the 3SUM algorithm here might be considered 
combinatorial, and more generally what is the meaning of ``combinatorial algorithms''.
See Section~\ref{sec:discuss} for discussion.  (Spoiler: we, the authors, do not have a definitive answer, and will
leave the question to the readers to decide for themselves.)

By known simple reductions~\cite{CyganMWW19}, we also immediately obtain
a simpler truly subquadratic randomized algorithm for $(\min,+)$-convolution for polynomially bounded integer inputs, which in turn has many applications, e.g., see \cite{CyganMWW19,KunnemannPS17,BringmannN21,Bringmann24,Jin24}.

\paragraph{$k$-SUM Hypothesis refuted for all odd $k$.}
As an unintended byproduct of this simpler approach, we obtain an $O(n^{\up{k/2}-\eps_k})$-time randomized algorithm for the $k$-SUM problem for polynomially bounded integer inputs, for every odd constant $k\ge 5$, for some constant $\eps_k>0$.
This thus refutes the $k$-SUM Hypothesis for all odd $k$ (at least for the integer setting, with randomization).  Alman and Vassilevska Williams~\cite{AlmanV26} explicitly listed the $k$-SUM Hypothesis for larger $k$ as one of the remaining open questions, as their approach based on selected-entries rectangular matrix multiplication does not seem to generalize.  This demonstrates another advantage of working with convolution-based approaches instead of matrix multiplication.

As is easy to see, $k$-SUM for odd $k$ reduces to an unbalanced instance of 3SUM, where the first  set has $n^{\down{k/2}}$ elements, the second set has $n$ elements, and the third (``query'') set has
$n^{\down{k/2}}$ elements.
As the number of queries is large, we need to refine the 3SUM algorithm to achieve lower cost per query.
To this end, the extra idea is to use convolution where elements are vectors in $[r]^d$ instead of subsets (bit vectors), for a larger constant base $r$.  This extended problem with selected entries can still be solved by adapting standard algorithms for subset convolution.  The remaining hashing part proceeds similar to before, and the analysis is only slightly more involved. (See Section~\ref{sec:ksum}.)

\paragraph{Real 3SUM and APSP derandomized.}
In the second half of the paper (Sections~\ref{sec:real}--\ref{sec:real:3sum}, which can be read independently of Sections~\ref{sec:3sum:simplified}--\ref{sec:ksum}), we turn to the question of derandomization in the real input case 
(the recent algorithms in the integer input case are already deterministic).
The subroutine for selected-entries rectangular matrix multiplication or lopsided AE Sparse Triangle
was already deterministic~\cite{AlmanV26}.
The issue lies in the reduction
of APSP/3SUM to AE Sparse Triangle. 
In the real case, the only known
reduction was the one by Chan, Vassilevska Williams, and Xu~\cite{ChanWX22}, which crucially
relied on random sampling.  
(Fischer, Polak, and Schmidt~\cite{fischer20263sumreallyhardrealtointeger} recently gave a reduction from Real 3SUM to Integer 3SUM, but it was complicated and also used randomization.)
In contrast, the previous best slightly subcubic real APSP algorithm~\cite{Williams18}
before the recent breakthrough was successfully derandomized \cite{ChanW21}, while the previous best slightly subquadratic real 3SUM algorithm~\cite{Chan20} was already deterministic.

We describe deterministic reductions from both Real 3SUM and Real APSP to \#AE Sparse Triangle.  The new reductions build on
Chan, Vassilevska Williams, and Xu's approach~\cite{ChanWX22}; the main new idea is clever but simple, especially for Real APSP (the details are a bit trickier for Real 3SUM\@).
These deterministic reductions lead to the first truly subquadratic and subcubic-time deterministic algorithms for Real 3SUM and Real APSP respectively.



\section{3SUM Simplified}\label{sec:3sum:simplified}

In this section, we provide a truly subquadratic-time randomized algorithm for the 3SUM problem for integer input.
Throughout the section, we use the following notation:
Let $[d]=\{0,\ldots,d-1\}$.
For a set $S$, let $2^S$ denote the set of all subsets of $S$, and
${S\choose k}$ denote the set of all subsets of $S$ of size $k$,
and ${S\choose \le k}$ denote the set of all subsets of $S$ of size at most $k$,
and  ${S\choose \ge k}$ denote the set of all subsets of $S$ of size at least $k$.
Let ${n\choose \le k} = {n\choose 0}+\cdots+{n\choose k}$ and
${n\choose \ge k} = {n\choose k}+\cdots+{n\choose n}$.
Let $1_E$ be the indicator function for the predicate $E$.
We use $\lg$ for base-2 logarithms when the base matters.

\subsection{Selected-Entries Subset Convolution}\label{sec:subset:conv}

In this subsection, we solve a version of the \emph{subset convolution} problem
with selected output entries, formally defined below.  We restrict ourselves to the
version where all input sets have size equal to a given value~$t$ (this is sometimes called 
the ``ranked'' version of the problem).

\begin{problem}[\sf Subset Convolution with Selected Entries]\label{prob:subset:conv}
Let $d\ge 2t$.
We are given collections $\AAA$ and $\BBB$ of subsets in ${[d]\choose t}$,
where each subset $A\in\AAA$ has weight $f(A)\in\R$ and each subset $B\in\BBB$ has
weight $g(B)\in\R$.  By default, let $f(A)=0$ if $A\not\in\AAA$ and $g(B)=0$ if $B\not\in\BBB$.
We are also given a collection $\CCC$ of ``query'' subsets in ${[d]\choose 2t}$.
For each $C\in\CCC$, we want to compute
\[ (f\star g)(C)\ :=\ \sum_{A\subseteq C} f(A)g(C\setminus A) \ =\ \sum_{(A,B):\ A\cap B=\emptyset,\ A\cup B=C} f(A)g(B).
\]
\end{problem}

The standard dynamic programming (DP) algorithm for
subset convolution (sometimes called Yates' algorithm or the ``fast zeta transform'')~\cite{BjorklundHK09,BjorklundHKK07}  turns out to be adaptable to the version with selected entries
and achieves output-sensitive running time, essentially without extra effort!
Roughly, the computation of the DP forms a DAG or lattice.
In the preprocessing phase, we simply simulate DP up to a certain level $s$ of the DAG for a trade-off parameter $s$. When given a query set $C$, we continue the DP from level $s$ up to $2t$, restricting ourselves to nodes of the DAG that are ``under'' $C$.  This ``trimmed'' variant of the fast zeta transform has appeared before in a paper by Bj{\"{o}}rklund, Husfeldt, Kaski, and Mikko Koivisto~\cite{BjorklundHKK10}.  To be self-contained, we give
a complete description below:

\begin{theorem}\label{thm:subset:conv}
Let integer $s$ be a parameter with $0\leq s\leq 2t$.
We can solve Problem~\ref{prob:subset:conv}
in total deterministic time
\[ \left(n+{d\choose \le s} + |\CCC|{2t\choose \ge s}\right) d^{O(1)}.
\]
\end{theorem}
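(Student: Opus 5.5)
The plan is to use the standard ranked zeta/M\"obius approach to subset convolution, truncated at level $s$. Here $n$ denotes the input size $|\AAA|+|\BBB|+|\CCC|$.

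\textbf{Step 1: ranked zeta transforms.} For each set $X$ and each rank $i\le t$, define
\[ \hat f_i(X)=\sum_{A\subseteq X,\,|A|=i} f(A), \qquad \hat g_i(X)=\sum_{B\subseteq X,\,|B|=i} g(B). \]
Only $i=t$ is nonzero, since all inputs have size $t$, so $\hat f(X)=\sum_{A\subseteq X} f(A)$. For $C$ with $|C|=2t$, the identity to use is
\[ (f\star g)(C)=\sum_{X\subseteq C}(-1)^{|C\setminus X|}\,\hat f(X)\,\hat g(X). \]
To verify it, a pair $(A,B)$ with $A,B\subseteq C$ contributes $f(A)g(B)\sum_{A\cup B\subseteq X\subseteq C}(-1)^{|C\setminus X|}$. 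This inner sum is $1_{A\cup B=C}$. Because $|A|+|B|=2t=|C|$, the condition $A\cup B=C$ forces $A\cap B=\emptyset$. So this is the usual ranked argument, and here ranks are even trivial.

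\textbf{Step 2: preprocessing.} Compute $h(X)=\hat f(X)\hat g(X)$ for every $X\in{[d]\choose\le s}$. I will use Yates' DP restricted to small sets. Let $\hat f^{(j)}(X)$ be the sum of $f(A)$ over $A\subseteq X$ with $A$ agreeing with $X$ outside the coordinates $\{0,\dots,j-1\}$. The recurrence is
\[ \hat f^{(j+1)}(X)=\hat f^{(j)}(X)+[j\in X]\,\hat f^{(j)}(X\setminus\{j\}). \]
This recurrence only refers to subsets of $X$, so it is closed on ${[d]\choose\le s}$. Initializing from the $n$ input weights, this costs ${d\choose\le s}\,d^{O(1)}$ with hashing or indexing, and the same works for $g$.

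\textbf{Step 3: answering a query.} For a query $C$, split the sum by $|X|$. The terms with $|X|\le s$ use the precomputed $h$. For the terms with $|X|>s$, I need $\hat f(X)$ for $X\subseteq C$ of size above $s$, and there are ${2t\choose \ge s}$ such $X$. Computing $\hat f(X)$ directly by Yates' DP would descend below level $s$. To avoid that, I will run the same DP inside the lattice $2^C$, restricted to sets of size $\ge s$, starting from the boundary values.

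The identity to use for this is the M\"obius relation. Let $F(X)=\hat f(X)$ for $|X|\le s$, already precomputed. For larger $X$, the function $\hat f$ restricted to $2^C$ is the zeta transform of $f$. Defining $\mu_s(Y)$ as the M\"obius inverse of $F$ on level $s$ sets, one gets $\hat f(X)=\sum_{Y\subseteq X,\,|Y|\ge s}\phi(Y)$, where $\phi(Y)=f(Y)$ for $|Y|>s$. At level exactly $s$, $\phi(Y)$ is an adjusted value equal to $\sum_{Z\subseteq Y}(-1)^{\dots}\cdots$. This is messy, so a cleaner route is needed.

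The cleaner route: the ``upward'' Yates DP over coordinates of $C$ on the truncated lattice ${C\choose\ge s}$ computes $\sum_{Y\subseteq X,|Y|\ge s}\phi(Y)$ correctly, because the DP step $X\mapsto X\setminus\{j\}$ with $|X\setminus\{j\}|<s$ can be dropped if we instead seed level $s$ with $\hat f$ itself. So I will seed $\hat f^{(0)}$ on ${C\choose \ge s}$ appropriately. I expect the main obstacle to be exactly this boundary bookkeeping: making Yates' DP on $\{X\subseteq C: |X|\ge s\}$ correct using only level-$\le s$ precomputed data. If that fails, I fall back to computing $\hat f(X)$ for $|X|>s$ by inclusion–exclusion from values at level $\le s$ plus the $f$-values. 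The hope is that this gives $|\CCC|{2t\choose\ge s}d^{O(1)}$ per the bound.

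Finally, summing over all queries gives the claimed total time $\bigl(n+{d\choose\le s}+|\CCC|{2t\choose\ge s}\bigr)d^{O(1)}$.
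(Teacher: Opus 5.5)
There is a genuine gap, and it sits in the regime $t<s\le 2t$. That is the only regime where the theorem beats the untrimmed algorithm: for $s\le t$ one has $\binom{2t}{\ge s}\ge 2^{2t-1}$, and in the 3SUM application $s$ is close to $2t$.

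In Step~3 you handle the terms with $|X|\le s$ by summing the precomputed $\hat f(X)\hat g(X)$ over all $X\subseteq C$ with $|X|\le s$. That is $\binom{2t}{\le s}$ terms per query. For $s=2t$ the theorem promises $d^{O(1)}$ per query, while you enumerate all $2^{2t}$ subsets of $C$. The low levels must reach the query already aggregated into a few boundary values, and your proposal provides no mechanism for that.

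On the zeta side, seeding the truncated Yates DP on $\binom{C}{\ge s}$ with the full transform $\hat f(Y)$ at level $s$ does not give $\hat f(X)$. Each $A$ with $|A|=t<s$ is counted once for every $Y$ with $A\subseteq Y\subseteq X$ and $|Y|=s$, so you obtain $\binom{|X|-t}{s-t}\hat f(X)$. Your abandoned $\phi$ attempt in fact shows that no boundary data independent of the DP stage can work: the equations at $|X|=s$ force $\phi=\hat f$ there, and the equations at $|X|=s+1$ then already fail.

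The missing idea, which is what the paper does, is to store stage-indexed intermediate DP values. For every $X\in\binom{[d]}{\le s}$ and every stage $i\in\{0,\dots,d\}$, keep the Yates values $f_i(X)=\sum_{S\subseteq X,\ S\cap[i]=X\cap[i]}f(S)$, the analogous $g_i(X)$, and the partial M\"obius sums $h_i(X)=\sum_{S\subseteq X,\ S\cap[i]=X\cap[i]}(-1)^{|X\setminus S|}\hat f(S)\hat g(S)$. This costs only an extra factor $d$ in preprocessing.

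At query time you run the same recurrences only on $\binom{C}{\ge s}$:
\begin{itemize}
\item $f_i(X)=f_{i+1}(X)+1_{[i\in X]}f_{i+1}(X\setminus\{i\})$, and likewise for $g$;
\item $h_i(X)=h_{i+1}(X)-1_{[i\in X]}h_{i+1}(X\setminus\{i\})$ with $h_d(X)=\hat f(X)\hat g(X)$.
\end{itemize}
Whenever $X\setminus\{i\}$ lands on level $s$, you read the stored value with the matching stage index $i+1$. That is exactly the partial sum the recurrence requires, so nothing is double counted, and everything below level $s$ enters through these boundary entries. You get $h_0(C)=(f\star g)(C)$ in $O(\binom{2t}{\ge s}d)$ time per query.

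Since $f$ lives on a single level, your zeta-side overcount could instead be divided out; this is the division-based ``Euler identity'' route the paper mentions. But the M\"obius side would then still need a rank-by-rank aggregation of the low levels, which your proposal does not supply.
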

\begin{proof}
We begin with preliminary definitions and facts (see \cite{BjorklundHK09,BjorklundHKK07} and the references therein for more background):
Given $f: 2^{[d]}\rightarrow\R$, the \emph{zeta (or M\"obius) transform} $\widehat{f}: 2^{[d]}\rightarrow\R$ is defined as
\[ \widehat{f}(X) = \sum_{S\subseteq X} f(S).
\]
Given $f,g: 2^{[d]}\rightarrow\R$, the \emph{covering product} $f\star_c g: 2^{[d]}\rightarrow \R$ is defined as
\[ (f\star_c g)(X) = \sum_{(A,B): A\cup B=X} f(A)g(B).
\]
Note that for $f,g:{[d]\choose t}\rightarrow\R$ 
and any $C\in {[d]\choose 2t}$, we have $(f\star g)(C)= (f\star_c g)(C)$,
since $A\cup B=C$, $|A|=|B|=t$, and $|C|=2t$ together imply $A\cap B=\emptyset$.
The covering product satisfies the following known ``inclusion--exclusion'' identity
\cite{BjorklundHKK07}\footnote{
It can be viewed as taking the pointwise product of the zeta transforms of $f$ and $g$, and then taking the
inverse zeta transform (which has a similar definition but with some sign changes).
A quick self-contained proof: the right-hand side expands
to $\sum_{(A,B): A,B\subseteq X} \gamma_{A,B}f(A)g(B)$, where $\gamma_{A,B} := \sum_{S: A\cup B\subseteq S\subseteq X} (-1)^{|X\setminus S|}$ evaluates to 1 if $A\cup B$ is $X$, but is otherwise
equal to $\sum_{i=0}^k (-1)^{k-i} {k\choose i}$, where $k=|X|-|A\cup B|$, which is identical to $(1-1)^k=0$ by the binomial theorem.}:
\[ (f\star_c g)(X) = \sum_{S\subseteq X} (-1)^{|X\setminus S|} \widehat{f}(S)\widehat{g}(S).
\]

\paragraph{Preprocessing algorithm.}  
\begin{enumerate}
\item
For each $X\in {[d]\choose \le s}$, pre-compute
$\widehat{f}(X)$.
More generally, for each $X\in {[d]\choose \le s}$, pre-compute
$f_i(X):=\sum_{S\subseteq X:\ S\cap[i]=X\cap[i]}f(S)$
(noting that $\widehat{f}(X)=f_0(X)$).
To this end, we can use a standard DP formula: 
$f_d(X)=f(X)$, and 
$f_i(X)= f_{i+1}(X) + 1_{[i\in X]} f_{i+1}(X\setminus\{i\})$.
\item
For each $X\in {[d]\choose \le s}$, pre-compute $\widehat{g}(X)$.
More generally, for each $X\in {[d]\choose \le s}$, pre-compute
$f_i(X):=\sum_{S\subseteq X:\ S\cap[i]=X\cap[i]}f(S)$.
This is similar.
\item
For each $X\in {[d]\choose \le s}$, pre-compute
$(f\star_c g)(X)$.
More generally, for each $X\in {[d]\choose \le s}$,
pre-compute $h_i(X):=\sum_{S\subseteq X:\ S\cap[i]=X\cap[i]}(-1)^{|X\setminus S|} \widehat{f}(S)\widehat{g}(S)$
(noting that $h_0(X)=(f\star_c g)(X)$).
To this end,
we use the following DP formula:
$h_d(X)=\widehat{f}(X)\widehat{g}(X)$, and 
$h_i(X)= h_{i+1}(X) - 1_{[i\in X]} h_{i+1}(X\setminus\{i\})$.
\end{enumerate}
The total time\footnote{
We assume the standard word RAM model with word size at least $d$.  If not, the time bounds increase by extra $d^{O(1)}$ factors.
We have ignored cost of dictionary look-up operations for subsets, which is only $d^{O(1)}$.
} for these DPs is $O\left( {d\choose \le s}  d\right)$.

\paragraph{Query algorithm.}
Given a query subset $C\in {[d]\choose 2t}$:
\begin{enumerate}
\item
For each $X\in {C\choose \ge s}$, compute
$\widehat{f}(X)$.
We can use the same DP formula, but using $f_i(X)$ for all $X\in {[d]\choose s}$ (which have been pre-computed during preprocessing) as base
cases.
\item
For each $X\in {C\choose \ge s}$, compute $\widehat{g}(X)$ similarly.
\item
For each $X\in {C\choose \ge s}$, compute
$(f\star_c g)(X)$.
We can use the same DP formula, but using $h_i(X)$ for all $X\in {[d]\choose s}$ (which have been pre-computed during preprocessing) as base
cases.  In particular, we obtain $(f\star_c g)(C)=(f\star g)(C)$ in the end.
\end{enumerate}
The total time per query subset $C$ is
$O\left( {|C|\choose \ge s} d\right) = O\left( {2t\choose \ge s} d\right)$.
\end{proof}

\subsection{Reducing 3SUM to Subset Convolution}\label{sec:3sum:to:subset:conv}

Now we solve integer 3SUM using the new subroutine for selected-entries subset convolution.  This uses
hashing techniques.
We first apply linear hashing (taking ``mod $p$'' for a prime $p$) to reduce the size of the input numbers and then
view the numbers as bit vectors using their binary representations, which then correspond to subsets of indices.
If the subsets corresponding to two numbers happen to be disjoint, then their union corresponds precisely to the sum,
and so subset convolution gives us exactly what we want!
However, usually we do not have disjointness.  The idea is to further randomly scramble the input numbers (using
a family of linear hash functions that achieve 3-wise independence), and with enough repetitions,
the disjointness property can be shown to hold in at least one of the trials with good probability.  

The final algorithm for the reduction is still relatively simple, but the analysis requires some technical effort.  One issue is collision of the hash of sums of pairs (which is why hashing alone does not solve integer 3SUM very efficiently in prior work, except in shaving logs~\cite{BaranDP08}), but in our case, subset convolution filters out pairs that do not satisfy the disjointness property, and this helps to bound the target universe size~$p$.  The precise  details are given below (we state the result more generally when the input sets $A$ and $B$ may not have the same size):

\begin{theorem}\label{thm:3sum:to:subset:conv}
Suppose Problem~\ref{prob:subset:conv} can be solved in
$O(\Pconv(n,d,t) + |\CCC|\Qconv(n,d,t))$ time.

Let $t<\lg \nnn$ be a parameter, and $d=\up{t+\lg\nnn+\lg\lg N+1}$.
Given sets $A,B,C\subseteq [N^{O(1)}]$ with $N=|A|+|B|$ and $\nnn=\sqrt{|A||B|}$,
we can determine whether $c\in A+B$ for each $c\in C$, with error probability $O(1/N)$, in total time
\[ \OO\left(\frac{\nnn^2}{{d\choose 2t}} (\Pconv(N,d,t)+N) + N\frac{\nnn}{2^t}\ +\ |C| \frac{\nnn}{2^t}(\Qconv(N,d,t)+1)\right).
\]
\end{theorem}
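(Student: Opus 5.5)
Each trial of the reduction consists of three operations: hash the inputs into $[p]$ with $p$ a prime slightly above $2^d$; scramble them with a random linear map so that, with noticeable probability, a witness pair $(a,b)$ with $a+b=c$ has disjoint binary supports; then restrict to elements of Hamming weight exactly $t$ and invoke the selected-entries subset convolution. Concretely, I take $h(x)=(\alpha x+\beta)\bmod p$ for random $\alpha,\beta\in[p]$. I write $h(a)$ and $h'(b)=(\alpha b-\beta')\bmod p$ as $d$-bit vectors, with the shifts arranged so that $h(a)+h'(b)\equiv h''(c)$ for $c=a+b$.

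If the bit sets of $h(a)$ and $h'(b)$ are disjoint, then there are no carries. Their integer sum is then exactly the union of the two subsets, and it is congruent mod $p$ to $h''(c)$. For each query $c$, the target integer is $h''(c)$ or $h''(c)+p$ (at most two candidates), and the candidate subset must have weight $2t$.

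The trial itself works as follows. I bucket $A$ by $h(a)$ restricted to weight $t$, setting $f(S)=$ the number of $a$ with $h(a)=S$. I define $g$ from $B$ the same way. I run the preprocessing once, which costs $\Pconv(N,d,t)+N$, and then one query per $c\in C$ with a weight-$2t$ candidate.

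Correctness hinges on the probability that a fixed witness pair survives. By pairwise (and, with the three related hashes, 3-wise) independence, $(h(a),h'(b))$ is close to uniform over pairs in $[p]^2$ summing to $h''(c)$. The chance that both have weight $t$, are disjoint, and have union equal to the target is about ${d\choose 2t}{2t\choose t}/p^2\cdot p\approx {d\choose 2t}2^{2t}/2^{2d}\cdot 2^d$. I would instead phrase this as: given that $h''(c)$ is a uniform weight-$2t$ set, there are ${2t\choose t}\approx 2^{2t}$ good splits out of $p\approx 2^d$ possible splits. So the success probability per trial is roughly $\frac{{d\choose 2t}}{2^d}\cdot\frac{2^{2t}}{2^d}\cdot 2^d\approx {d\choose 2t}/\nnn^2\cdot(\text{poly})$, using $2^d\approx 2^t\nnn\lg N$. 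Hence $\OO(\nnn^2/{d\choose 2t})$ repetitions amplify success to $1-O(1/N)$, matching the first term of the bound.

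The other two terms come from false positives. A positive count for $c$ could arise from pairs with $a+b\ne c$ whose hashes happen to add up to the target. I handle these by verification: for each reported positive, I list the contributing pairs and check them directly. The expected number of spurious pairs per query per trial is at most $|A||B|\cdot\Pr[\text{both weight }t\text{, disjoint, sum matches}]\lesssim \nnn^2\cdot 2^{2t}/p^2$-ish. Summed over trials, this should come to $\OO(\nnn/2^t)$ per query. This is exactly why $d$ includes the $+t+\lg\nnn$: it makes $p\ge 2^t\nnn$. The listing cost is charged as the $|C|\frac{\nnn}{2^t}(\Qconv+1)$ term, and the $N\nnn/2^t$ term covers bucket lists. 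To find witnesses rather than just counts, I would either use the standard trick of repeating with random subsampling, or encode the count plus a sum of identifiers.

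I expect the main obstacle to be the collision analysis. The claim "$h(a),h'(b)$ jointly uniform conditioned on the sum" needs care because linear hashing mod $p$ is only 2-wise independent on pairs. Bounding spurious pairs requires 3-wise behavior over $(a,b,c)$ with $a+b\ne c$. My plan is to use $\alpha$ for scaling and independent shifts, and to bound $\Pr[\alpha(a+b-c)\equiv \delta]=1/p$ for each of the $O(1)$ carry offsets $\delta$. Then disjointness and weight-$t$ filtering together give the extra $2^{-t}$-type savings: only ${d\choose t}^2$ weight patterns out of $2^{2d}$ survive, and only $2^{2t}$ splits per target.
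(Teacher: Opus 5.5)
There are two genuine gaps, plus an accounting error.

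\textbf{The modulus is never randomized.} For a fixed prime $p\approx 2^d$, the map $x\mapsto(\alpha x+\beta)\bmod p$ is universal only on $[p]$, not on the input universe $[N^{O(1)}]$: integers congruent mod $p$ are hashed identically for every choice of $\alpha,\beta$. Since $p=\widetilde{\Theta}(\nnn 2^t)$ is far below $N^{O(1)}$, an adversary can make $\Theta(\nnn^2)$ pairs satisfy $a'+b'\equiv c\pmod p$ with $a'+b'\neq c$ (e.g.\ take all inputs divisible by $p$). Your bound $\Pr[\alpha(a+b-c)\equiv\delta]=1/p$ fails for exactly those pairs, so the spurious work per query becomes quadratic. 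The paper picks $p$ uniformly among primes in $(2^d,2^{d+1}]$, so a fixed nonzero difference is divisible by $p$ with probability $O(\log N/2^d)$. This gives $\Ex[|L|]=O(\nnn^2\log N/2^d)$ for $L=\{(a',b')\in A\times B: a'+b'\equiv c\pmod p,\ a'+b'\neq c\}$. It also gives small expected sets $A_{\bad},B_{\bad}$ of elements colliding mod $p$ inside $A$ or $B$, which are handled by brute force over $A_{\bad}+B$ and $A+B_{\bad}$. That brute force is the $N\nnn/2^t$ term.

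Your attribution of the remaining term is also off. $|C|\frac{\nnn}{2^t}(\Qconv+1)$ is simply the query cost: $c$ is a selected entry in a trial with probability ${d\choose 2t}/p$, so it is queried $\OO(\nnn^2/p)=\OO(\nnn/2^t)$ times over the $\mu=\OO(\nnn^2/{d\choose 2t})$ trials. It is not a false-positive cost. (Also, $h''(c)$ is the only possible target: a disjoint union inside $[d]$ is an integer below $2^d<p$.)

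\textbf{The verification step is not available.} A solver for Problem~\ref{prob:subset:conv} returns only weighted sums over the pairs contributing to each selected entry; it cannot list them. A sum of identifiers decodes only when exactly one pair contributes, which fails whenever the witness coexists with a spurious pair or with other genuine witnesses. Listing all contributing pairs by random subsampling would mean re-running the whole convolution, including the $\Pconv$ preprocessing, a number of times that grows with the number of contributing pairs, and you give no bound on that.

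The missing idea is the paper's second-moment test. Also compute $\tau(c)=\sum(a+b-c)^2$ over the contributing pairs; it expands into $O(1)$ weighted subset convolutions (for the $2ab$ term, put weight $2a$ on $\pi(f(a))$ and $b$ on $\pi(g(b))$). Report ``yes'' iff $\sigma(c)>0$ and $\tau(c)=0$; this has no false positives at all.

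For the converse, the paper conditions on $f(a)=x$, $g(b)=y$ for one of the ${d\choose 2t}{2t\choose t}$ valid splits. It then shows, for a good prime, that $\Pr[\tau(c)>0]\le |L|{2t\choose t}/(p-1)\le 1/4$. This rests on three facts:
\begin{itemize}
\item $f,g,h$ share the same multiplier $\xi$, so $\Pr[f(a')=z\mid f(a)=x,\,g(b)=y]\le 1/(p-1)$ whenever $a'\not\equiv a\pmod p$.
\item Every pair in $L$ has $a'\not\equiv a$, which is why witnesses with $a\in A_{\bad}$ or $b\in B_{\bad}$ are handled separately.
\item The choice $2^d\approx 2^{t+1}\nnn\lg N$ makes the bound at most $1/4$.
\end{itemize}
The per-trial success probability is then $\widetilde{\Omega}({d\choose 2t}/\nnn^2)$. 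Your intermediate expressions for it carry a stray factor of $p$, although your final estimate is right.
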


\begin{proof}\

\paragraph{Algorithm.} Let $p_0=2^d$.  
Pick a random prime $p$ in $(p_0,2p_0]$.  
Let 
$A_{\bad}=\{a\in A: \exists a'\in A\ \mbox{with}\ a'\equiv a\pmod p\ \mbox{and}\ a'\neq a\}$ and 
$B_{\bad}=\{b\in B: \exists b'\in B\ \mbox{with}\ b'\equiv b\pmod p\ \mbox{and}\ b'\neq b\}$ (these are elements that are in non-singleton buckets when we hash modulo $p$).
For each $c\in C$, if 
$c\in A_\bad+B$ or 
$c\in A+B_\bad$, then report ``yes'' for $c$.

Next, we repeat the following $\mu$ times, for a parameter $\mu$ to be determined:
\begin{enumerate}
\item
Pick 
random numbers $\xi,\alpha,\beta\in [p]$ with $\xi\neq 0$.
Define functions $f,g,h:\Z\rightarrow [p]$:
\begin{eqnarray*}
f(a) &=& (a\xi+\alpha)\bmod p\\
g(b) &=& (b\xi+\beta)\bmod p\\
h(c) &=& (c\xi+\alpha+\beta)\bmod p.
\end{eqnarray*}
Let $\pi:[2^d]\rightarrow 2^{[d]}$ where $\pi(x)$ is the set of all positions of the
1-bits in the binary representation of~$x$.
\item
Let $A^*=\{a\in A\setminus A_\bad: f(a)<2^d,\ |\pi(f(a))|=t\}$ and
$B^*=\{b\in B\setminus B_\bad: g(b)<2^d,\ |\pi(g(b))|=t\}$.
\item
For each $c\in C$ with $h(c)<2^d$ and $|\pi(h(c))|=2t$, compute\footnote{
On $\tau(c)$, this ``second moment'' idea has been used in prior work, e.g., on
sparse convolutions~\cite{BringmannFN21}.
}
\begin{eqnarray*}
\sigma(c) &=& |\{(a,b)\in A^*\times B^*:\ \pi(f(a))\cap\pi(g(b))=\emptyset,\ 
\pi(f(a))\cup\pi(g(b))=\pi(h(c))\}|\\[3pt]
\tau(c) &=& \sum_{(a,b)\in A^*\times B^*:\ \pi(f(a))\cap\pi(g(b))=\emptyset,\ 
\pi(f(a))\cup\pi(g(b))=\pi(h(c))} (a+b-c)^2.
\end{eqnarray*}
If $\sigma(c)>0$ and $\tau(c)=0$, then report ``yes'' for $c$. 
\end{enumerate}
If ``yes'' has not been reported for $c$ in any of the trials, then report ``no''.

\paragraph{Running time analysis.}
Over a random prime $p\in (p_0,2p_0]$, for each $b,b'\in B$,
the probability that $b\equiv b' \pmod p$ is $O((\log N)/p_0)$, since there
are at most $O(\log N/\log p_0)$ prime divisors of $b-b'$ in $(p_0,2p_0]$ and $O(p_0/\log p_0)$ primes in that range.
Thus, 
$\Ex[|A_\bad|]\le \OO(|A|^2/p_0)$, and similarly, 
$\Ex[|B_\bad|]\le \OO(|B|^2/p_0)$.
We can pre-sort $A_\bad+B$ and $A+B_\bad$
in expected time $\OO(|A|^2|B|/p_0+|A||B|^2/p_0)=\OO(N\nnn^2/p_0)=\OO(N\nnn/2^t)$, as $p_0=2^d=\widetilde{\Theta}(\nnn 2^t)$.
Afterwards, the checks for $c\in A_\bad+B$
and $c\in A+B_\bad$ takes $\OO(1)$ time each.


In step~3,
computing the values $\sigma(c)$ corresponds to computing selected entries of the
subset convolution of the collection of sets $\{\pi(f(a)): a\in A^*\}$ 
and $\{\pi(g(b)): b\in B^*\}$, each assigned unit weight.
Computing the values $\tau(c)$ corresponds to computing selected entries 
of $O(1)$ similar subset convolutions, since $(a+b-c)^2$ expands to
$a^2+b^2+2ab-2ca - 2cb + c^2$.
For example, to compute the sum contributed by the $2ab$ term, we assign weight $2a$ to the
set $\pi(f(a))$ and $b$ to the set $\pi(g(b))$.  
The selected entries are $C^* := \{c\in C: h(c)<2^d,\ |\pi(h(c))|=2t\}$.
For fixed $p$, $c$, and $z\in [2^d]$, 
we have $\Pr[h(c)=z]=1/p$ (since for fixed $p,\xi,\alpha$, there is one choice of $\beta$ satisfying the condition).
Thus, $\Pr[c\in C^*] = {d\choose 2t}\frac{1}{p}$.  
So, $\Ex[C^*] = {d\choose 2t}\frac{1}{p}|C|=\Theta\left({d\choose 2t}\frac{1}{\nnn 2^t}|C|\right)$.
The total expected cost
is $\OO\left(\mu\cdot \left(\Pconv(N,d,t)+ N +|C|{d\choose 2t}\frac{1}{\nnn 2^t}\Qconv(N,d,t)\right)\right)$.
Eventually, we will set $\mu=\OO(\nnn^2/{d\choose 2t})$ and this will give the time bound stated in the theorem.
(Expected time bounds can be made worst-case for Monte Carlo algorithms by standard tricks, namely, reporting error when actual time is more than a constant factor times the expectation, and then repeating to lower the error probability.)

\paragraph{Error probability analysis.}
From the definitions of $f,g,h,\pi$, we have:
\begin{itemize}
\item $h(a+b)\equiv f(a)+g(b)\pmod p$.
\item If $\pi(x)\cap\pi(y)=\emptyset$, then $\pi(x+y)=\pi(x)\cup\pi(y)$.
\item $h(c)=h(c')$ iff $c\equiv c'\pmod p$.
\item $\pi(x)=\pi(y)$ iff $x=y$.
\end{itemize}
In particular, if $f(a),g(b)<2^d$ and $\pi(f(a))\cap\pi(g(b))=\emptyset$,
then $\pi(f(a))\cup\pi(g(b))=\pi(h(c))$ iff $\pi(f(a)+g(b))=\pi(h(c))$ iff
$f(a)+g(b)=h(c)$ iff $h(a+b)=h(c)$ iff $a+b\equiv c\pmod p$.

Fix $c\in C$.
Suppose $\sigma(c)>0$ and $\tau(c)=0$.  If a sum of squares with at least one term is equal to 0, then one of the terms (in fact, all of the terms) must be 0.  Thus,
we must have $c=a+b$ for some $a\in A$ and $b\in B$.

Conversely, suppose $c=a+b$ for some $a\in A$ and $b\in B$.
Assume 
$b\not\in B_\bad$ (for otherwise, we would have initially 
reported ``yes'' for $c$).
We will show that $\sigma(c)>0$ and $\tau(c)=0$ with good probability.
To this end, let $L=\{(a',b')\in A\times B: a'+b'\equiv c\pmod p,\ a'+b'\neq c\}$.
Note that the condition $a'+b'\neq c$ implies $a'\neq a$, because $b\not\in B_\bad$.
For a random prime $p\in(p_0,2p_0]$,
for each $(a',b')\in A\times B$ with $a'+b'\neq c$, the probability that $a'+b'\equiv c\pmod p$ is $O((\log N)/p_0)$.
Thus, $\Ex_p[|L|]\le O(\nnn^2 (\log N)/p_0)$.
Call $p$ \emph{good} for $c$ if $|L| \le (\nnn^2\lg^2N)/p_0$.
By Markov's inequality, the probability that $p$ is not good is at most $O(1/\log N)$.
Fix a good prime $p$ for $c$:
\begin{itemize}
\item For any fixed $a,b$ and $x,y\in[2^d]$, we have $\Pr[(f(a)=x) \wedge (g(b)=y)] = 1/p^2$ (since for a fixed $\xi$, there is one choice of $\alpha$ and $\beta$ satisfying the condition).
\item For any fixed $a,b,a'$ with $a\neq a'$ and $x,y,z\in [2^d]$, we have $\Pr[(f(a)=x)\wedge (g(b)=y)\wedge  (f(a')=z)] \le 1/(p^2(p-1))$ (since there is at most one choice of $\alpha,\beta,\xi$ satisfying the condition).
In particular, $\Pr[f(a')=z\mid  (f(a)=x)\wedge (g(b)=y)]\le 1/(p-1)$.
\item Fix $x,y\in [2^d]$ with $|\pi(x)|=|\pi(y)|=t$ and $\pi(x)\cap\pi(y)=\emptyset$.
If $f(a)=x$ and $g(b)=y$, then $\sigma(c)>0$.
Conditioned on $f(a)=x$ and $g(b)=y$,
the event that $\tau(c)>0$
implies the existence of $(a',b')\in L$ with $|\pi(f(a'))|=|\pi(g(b')|=t$ and
$\pi(f(a'))\cup\pi(g(b'))=\pi(h(c))=\pi(x+y)=\pi(x)\cup\pi(y)$.
Letting $f(a')=x'$, there are at most ${2t\choose t}$ candidates for $x'$ satisfying
$|\pi(x')|=t$ and $\pi(x')\subseteq \pi(x)\cup\pi(y)$.
So, 
\begin{equation}\label{eqn1} \Pr[\tau(c)>0 \mid (f(a)=x)\wedge (g(b)=y)]\ \le\ |L| {2t\choose t} \frac{1}{p-1}\ \le\ {2t\choose t} \frac{\nnn^2\lg^2 N}{p_0^2}\ \le\ \frac14
\end{equation}
by goodness of $p$ and by the choice of $p_0=2^d\approx 2^{t+1} \nnn\lg N$ (as we have chosen $d=\up{t+\lg\nnn+\lg\lg N+1}$).
\item Summing over all ${d \choose 2t} {2t\choose t}$
choices of $x,y\in [2^d]$ with $|\pi(x)|=|\pi(y)|=t$ and $\pi(x)\cap\pi(y)=\emptyset$,
we then obtain
\begin{equation}\label{eqn2} \Pr[(\sigma(c)>0)\wedge (\tau(c)=0)]\ \ge\ {d \choose 2t} {2t\choose t}\frac{1}{p^2}\cdot\frac{3}{4}\ \ge\ \widetilde{\Omega}\left({d\choose 2t}\frac{1}{\nnn^2}\right).
\end{equation}
\end{itemize}
By choosing the number of trials to be $\mu=\OO(\nnn^2/{d\choose 2t})$, we can guarantee that the error probability 
is at most, say, $0.1$, conditioned on a good prime $p$ for $c$.
Unconditionally, the error probability is at most $0.1+O(1/\log N)$ per $c$.
By repeating the entire algorithm $O(\log(N|C|))$ times, the error probability can be
made below $1/(N|C|)$ per~$c$.
\end{proof}

As a sanity check, if we plug in the standard non-output-sensitive bounds
for subset convolution with $\Pconv=\OO({d\choose 2t})$ and $\Qconv=O(1)$ to the above theorem,
we recover the trivial quadratic bound for 3SUM\@.
By applying Theorem~\ref{thm:subset:conv} instead, we obtain an overall time bound of
\[ 
\OO\left(  \frac{\nnn^2}{{d\choose 2t}} \left({d\choose \le s} + N\right) + N\frac{\nnn}{2^t}\ +\ |C|\frac{\nnn}{2^t} {2t\choose \ge s}
\right),
\]
where $d=t+\lg\nnn+O(\log\log N)$.
Setting $t=\gamma\lg \nnn$ and $d=(1+\gamma)\lg\nnn+O(\lg\lg\nnn)$ and $s=(2\gamma-\delta)\lg \nnn$ for $\delta<\gamma<1/3$
yields
\[ \OO\left(  \frac{\nnn^2}{{(1+\gamma)\lg \nnn\choose 2\gamma\lg\nnn}} \left( {(1+\gamma)\lg\nnn\choose (2\gamma-\delta)\lg\nnn} + N\right) + N\nnn^{1-\gamma}\ +\ |C|\nnn^{1-\gamma} {2\gamma\lg\nnn\choose \delta\lg\nnn}
\right).
\]
For a sufficiently small constant $\delta$, this gives a truly subquadratic algorithm.
For example, for $\gamma=0.2$ and $\delta=0.033$, we obtain:

\begin{corollary}
Given sets $A,B,C\subseteq [N^{O(1)}]$ with $N=|A|+|B|$ and $\nnn=\sqrt{|A||B|}$,
we can determine whether $c\in A+B$ for each $c\in C$, with error probability $O(1/N)$,
in $O(\nnn^{1.965} + N\nnn^{0.899}\ +\ |C|\nnn^{0.965})$ total time.
\end{corollary}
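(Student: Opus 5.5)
The plan is to substitute $\gamma=0.2$ and $\delta=0.033$ into the displayed bound that comes just before the corollary. That bound came from combining Theorem~\ref{thm:3sum:to:subset:conv} with Theorem~\ref{thm:subset:conv}, using $t=\gamma\lg\nnn$, $d=(1+\gamma)\lg\nnn+O(\lg\lg\nnn)$ and $s=(2\gamma-\delta)\lg\nnn$. Both hypotheses are met. First, $t<\lg\nnn$. Second, $d\ge 2t$, because $d\approx 1.2\lg\nnn\ge 0.4\lg\nnn$. It remains to estimate each binomial coefficient with the entropy bound ${m\choose \le \theta m}={m\choose \theta m}\cdot\nnn^{o(1)}=2^{H(\theta)m+o(\log\nnn)}$ for $\theta\le 1/2$, where $H$ is the binary entropy function. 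The additive $O(\lg\lg\nnn)$ in $d$ and the polynomial factors $d^{O(1)}$ only cost polylogarithmic factors, which the small slack in the exponents absorbs.

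The terms are handled one at a time. For the preprocessing term, the main factor is ${(1+\gamma)\lg\nnn\choose 2\gamma\lg\nnn}=\nnn^{1.2H(1/3)}\approx\nnn^{1.102}$, with $1.2\cdot 0.9183\approx 1.102$. Next, ${1.2\lg\nnn\choose 0.367\lg\nnn}$ is at most $\nnn^{1.2H(0.3058)}$; here $H(0.3058)\approx 0.888$, so this is $\approx\nnn^{1.066}$. Since $s/d<1/2$, the "$\le s$" sum is dominated by its top term. This gives $\nnn^{2-1.102+1.066}\approx\nnn^{1.964}$, which is at most $\nnn^{1.965}$. The companion term $\nnn^2N/{d\choose 2t}$ is $N\nnn^{0.898}$. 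The middle term is $N\nnn^{1-\gamma}=N\nnn^{0.8}$. Both fit inside $N\nnn^{0.899}$.

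For the query term, I would rewrite ${2t\choose\ge s}$ as ${2t\choose \le 2t-s}$. We have $2t-s=\delta\lg\nnn$, and $\delta/(2\gamma)=0.0825<1/2$, so the binomial bound gives ${0.4\lg\nnn\choose 0.033\lg\nnn}\le\nnn^{0.4H(0.0825)}$. Since $H(0.0825)\approx 0.411$, this is $\approx\nnn^{0.164}$. The query term is then $|C|\nnn^{0.8+0.164}\le|C|\nnn^{0.965}$.

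The error probability $O(1/N)$ carries over directly from Theorem~\ref{thm:3sum:to:subset:conv}, because Theorem~\ref{thm:subset:conv} is deterministic. The only step that needs care is the numerics: the entropy values have to be computed accurately enough that the rounded exponents $1.965$, $0.899$ and $0.965$ really are upper bounds, with a small margin left over for the $\nnn^{o(1)}$ factors. If a margin turns out too tight, I would perturb $\delta$ slightly.
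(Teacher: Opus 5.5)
Your proposal is correct and is exactly the paper's argument: the corollary follows by substituting $\gamma=0.2$ and $\delta=0.033$ into the displayed bound obtained from Theorems~\ref{thm:3sum:to:subset:conv} and~\ref{thm:subset:conv}. Your entropy estimates check out, giving exponents of about $1.9640$, $0.8980$ and $0.9644$, which leaves enough slack to absorb the polylogarithmic factors hidden by the $\OO$ notation.
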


For $|A|=|B|=|C|=n$, this gives an $O(n^{1.965})$-time algorithm for 3SUM (or more generally ``All-Numbers 3SUM''~\cite{WilliamsW18})
for integer input.  Though it is not our intention,
this improves over the $O(n^{1.998})$ randomized time bound originally reported by Alman and Vassilevska Williams~\cite{AlmanV26}.\footnote{
The current best (Lean-verified) result reported at \url{https://bigospeedrun.com/3sum} as of Oct 21, 2026 is near $O(n^{1.921})$, but there is no accompanying human-written manuscript.  We reiterate that our main goal is better exposition, to make the ideas easier to understand and appreciate.
}

In fact, the approach gives a data structure for preprocessing $A$ and $B$ in $O(n^{1.965})$ time and space, so that we can determine whether a query number is in $A+B$ in $O(n^{0.965})$ 
time with low error probability (this data structure problem is called ``3SUM Indexing''~\cite{GoldsteinKLP17}).
It is easy to separate the preprocessing and the query parts of the algorithm in Theorem~\ref{thm:3sum:to:subset:conv}.

It is likely possible to turn this Monte Carlo randomized algorithm into Las Vegas with more effort, but we will not elaborate as we aim for the simplest presentation.




\subsection{Discussion}\label{sec:discuss}

The algorithm here raises some interesting  philosophical questions:

\paragraph{Could the breakthrough have been found earlier?}
To some researchers, the solution reported by Alman and Vassilevska Williams~\cite{AlmanV26} felt like something that
humans could have eventually discovered, or could have been discovered earlier had we been luckier.  It might be that not many researchers in fine-grained complexity
are comfortable enough to open up the matrix multiplication black box to seriously try
solving the selected-entries matrix multiplication problem faster.
The approach here completely lowers the barrier of entry on the algebraic side (subset convolution 
is only about inclusion--exclusion formulas), so
in hindsight, there is even less excuse for researchers to have missed the breakthrough!
Experts in subset convolution~\cite{BjorklundHKK10} already knew the idea in Section~\ref{sec:subset:conv}, while experts in hashing might be able to figure out the details in Section~\ref{sec:3sum:to:subset:conv} on their own, with some effort.
However, without the more complicated solution in~\cite{AlmanV26}, we would not have seriously attacked the 3SUM problem
via selected-entries subset convolution in the first place.

\paragraph{Is the 3SUM algorithm here combinatorial?}
The answer to this question is subjective, as there is no formal definition of ``combinatorial algorithms''.
\begin{itemize}
\item If ``combinatorial'' means avoiding fast matrix multiplication but allowing FFT, the answer is yes.
The motivation for this informal definition is
 that fast matrix multiplication algorithms are not very practical, but FFT is.
\item If ``combinatorial'' means avoiding both fast matrix multiplication and FFT, the answer is still yes.
We use the fast zeta transform, which is just dynamic programming with inclusion--exclusion formulas.
Zeta transforms and inclusion--exclusion have been used before not just in exponential-time algorithms (e.g., \cite{Karp82,BjorklundHKK07,BjorklundHK09}) but also
for some counting problems in fine-grained complexity (e.g.,~\cite{WilliamsW13}).
In algorithm design, we often use subtraction tricks without referring to them as non-combinatorial.
(In fact, the word ``combinatorial'' literally refers to counting, and inclusion--exclusion is a standard trick in combinatorics.)
However, one could debate why we allow inclusion--exclusion but not, say, Strassen's matrix multiplication
algorithm (or even Sch\"onhage's), which is also based on ``just'' subtractions and clever cancellations. 
\item If ``combinatorial'' means avoiding completely all algebraic techniques, the answer is no.
However, one could say that hashing techniques, commonly used in algorithm design (including the reduction
in Section~\ref{sec:3sum:to:subset:conv}), are also algebraic. Then many algorithms that were considered combinatorial before
might not be.  It is not clear where one would draw the line\ldots
\end{itemize}

If the algorithm here were considered to be combinatorial, then
this would also imply a new combinatorial, truly subquadratic algorithm for Boolean convolution
of two $n$-bit vectors, since the latter is a special case of All-Numbers 3SUM\@;
but this algorithm would be more complicated (and slower) than, say, Karatsuba's algorithm.

If the ultimate goal is just to exclude non-practical algorithms, then the current consensus of combinatorial algorithms does not seem too appropriate: some of the known ``combinatorial'' algorithms such as
\cite{AbboudFKLM24} for Boolean matrix multiplication are far more complicated and probably less practical than Strassen's, for example, and some of the known algorithms for shaving logs \cite{Chan15,Yu18} have large hidden constant factors and would not be competitive in practice.  The 3SUM algorithm here in its present form may not be too practical either.

If the defining characteristic of combinatorial matrix multiplication algorithms is generalizability 
to $(\min,+)$-matrix multiplication (which would successfully exclude Strassen's but include some log-shaving
algorithms~\cite{Chan19}), then the recent breakthrough on
APSP/$(\min,+)$-matrix multiplication has turned the question upside-down.

Here is another alternative, informal, and imprecise way to characterize combinatorial algorithms: \emph{intermediate values
maintained by the algorithm should be describable ``naturally''}.  For example, Strassen's matrix multiplication
algorithm and Karatsuba's convolution algorithm do not seem to fulfill this requirement, since
intermediate values arising during recursion seem ``magical'' and do not have clear ``meanings''.  On the other hand, 
the 3SUM algorithm here arguably satisfies this criterion: the intermediate values maintained by the DP are the
zeta transforms and the covering products of subsets, which have relatively simple definitions (there are alternative simple
DP formulas, based on a so-called ``Euler identity'', which bypasses the intermediate values $f_i(X)$ and $h_i(X)$, if we use
divisions).

\section{More on 3SUM, and $k$-SUM for Odd $k$}\label{sec:ksum}

The 3SUM algorithm in the previous section has query cost $O(n^{0.965})$, which 
cannot be lowered to below $n^{2/3}$ even if we allow larger preprocessing.
In this section, we describe a variant of the algorithm, which can make the query cost
an arbitrarily small polynomial, while still supporting the unbalanced case when the sets $A$ and $B$ have different sizes.  As a consequence, this leads to a faster
algorithm for $k$-SUM for any constant odd $k$ for integer input.

\subsection{Selected-Entries Convolution in $[r]^d$ for Small $r$}

The main additional idea is to use a larger base $r>2$.  This requires
a generalization of the subset convolution problem where elements are  vectors in $[r]^d$.
The $r=2$ or bit-vector case corresponds precisely to subset convolution.  
This is essentially  standard $d$-dimensional convolution with coordinate values in $[r]$, which is known to be reducible to standard one-dimensional convolution over $[O(r)^d]$ when $d$ is a constant and $r$ is a variable.  However, we need the opposite regime when $r$ is a constant and $d$ is a variable, and we need the version with selected entries.

\newcommand{\HHH}{{\cal H}}

\begin{problem}\label{prob:vec:conv}
Let $\HHH_{r,d,t}=\{x\in [r]^d: \|x\|_1=t\}$.
We are given two functions $f:\HHH_{r,d,t_1}\rightarrow\R$ and
$g:\HHH_{r,d,t_2}\rightarrow\R$ with $O(n)$ nonzero entries
and a subset $C\subseteq \HHH_{r,d,t}$ with $t=t_1+t_2$.  For each $c\in C$,
we want to compute
\[ (f\star_r g)(c)\  :=\ \sum_{a,b\in [r]^d:\ c=a+b} f(a)g(b),\]
where $a+b$ denotes vector addition over $\Z^d$.
\end{problem}

This version of the problem for vectors in $[r]^d$ can be viewed as selected-entries subset convolution for 
\emph{multisets} with multiplicities bounded by $r$.  We can reduce it to standard
subset convolution for sets, by reinterpreting multisets in $[d]$ as sets in $[dr]$, as follows.
Given a set $A\subseteq [dr]$, let $\mu_{j,A}=|A\cap [jr,(j+1)r)|$; we call this 
the \emph{multiplicity} of $j$.
Two sets $A,A'\subseteq [dr]$ are \emph{multiplicity-equivalent}
if $\mu_{j,A}=\mu_{j,A'}$ for all $j\in [d]$.
A collection $\AAA$ of sets in $[dr]$ with weight function $f$ is
\emph{multiplicity-invariant} if
for every $A\in\AAA$ and every $A'$ that is multiplicity-equivalent to $A$,
we have $A'\in\AAA$ and $f(A')=f(A)$.
A multiplicity-invariant collection can be represented implicitly: it suffices
to store one representative set in each equivalence class, and for each such set $A$,
store just the multiplicities of each of its elements.
The subset convolution of two multiplicity-invariant collections is multiplicity-invariant,
and the zeta transform of a multiplicity-invariant collection is multiplicity-invariant.

\begin{lemma}\label{thm:subset:conv:mult}
Let $s\le t$.
Given two multiplicity-invariant, weighted collections $\AAA$ and $\BBB$ of sets in $[dr]$ with $O(n)$ multiplicity-equivalence classes, where sets in $\AAA$ have size $t_1$ and
sets in $\BBB$ have size $t_2$, and given
a collection $\CCC$ of sets in $[dr]$ of size $t=t_1+t_2$,
we can compute the subset convolution output of each set in $\CCC$ in total deterministic time
\[ \left(n + H_r(d, \le\! s)\ +\ |\CCC| H_r(d,t,\ge\! s)\right) (dr)^{O(1)},
\]
where $H_r(d,\le\!s)$ denotes the number of vectors in $[r]^d$ with $L_1$ norm at most $s$,
and $H_r(d,t,\ge\!s)$ denotes the maximum number of vectors in $[r]^d$ with $L_1$ norm at least $s$ that is dominated\footnote{
$(x_0,\ldots,x_{d-1})$ is \emph{dominated} by $(y_0,\ldots,y_{d-1})$ if $x_j\le y_j$ for all $j\in [d]$. 
} by a vector of  $L_1$ norm  $t$.
\end{lemma}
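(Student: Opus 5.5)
We will rerun the proof of Theorem~\ref{thm:subset:conv} on the ground set $[dr]$ for general sizes $t_1,t_2$, with preprocessing up to level $s$ and queries continuing from level $s$ up to $t$. The key point is that every table the DP maintains is multiplicity-invariant, so we only ever store and compute one value per equivalence class. A class is a vector in $[r+1]^d$ (multiplicities lie in $\{0,\ldots,r\}$; after renaming $r$ this is what $H_r$ counts). Thus the number of stored entries drops from ${dr\choose\le s}$ to the number of multiplicity vectors of $L_1$ norm at most $s$. For a query $C$ with multiplicity vector $c$, the relevant entries drop to the number of multiplicity vectors dominated by $c$ of norm at least $s$.

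The identities from Theorem~\ref{thm:subset:conv} carry over unchanged. For $C$ of size $t_1+t_2$ we have $(f\star g)(C)=(f\star_c g)(C)$ after restricting to pairs with $|A|=t_1$, $|B|=t_2$, and the inclusion--exclusion formula for $f\star_c g$ holds. What must change is the DP. Yates' coordinate-by-coordinate recursion over the $dr$ elements passes through non-invariant intermediate tables $f_i$. I would therefore replace it by a block-by-block recursion over the $d$ blocks $[jr,(j+1)r)$. Define $f_j(X)$ as the sum of $f(S)$ over $S\subseteq X$ with $S$ agreeing with $X$ on blocks $0,\ldots,j-1$. This table is invariant under permutations inside each block, so it is indexed by the multiplicity vector $x$ of $X$.

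The recursion then reads
\[ f_j(x)=\sum_{m=0}^{x_j}{x_j\choose m} f_{j+1}(x - m e_j), \]
where the binomial counts the subsets of block $j$ of size $x_j-m$ that are removed. Each multiplicity-class value thus represents any single representative set. Similarly,
\[ h_j(x)=\sum_{m}(-1)^m{x_j\choose m}h_{j+1}(x-me_j), \]
starting from $h_d(x)=\widehat f(x)\widehat g(x)$. Each step costs $O(r)$, and there are $d$ levels, so preprocessing over all $x$ with $\|x\|_1\le s$ costs $H_r(d,\le s)\,(dr)^{O(1)}$.

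The input is given as $O(n)$ representatives with their multiplicities. I would initialize $f_d$ on the classes. For a query, I would take the classes dominated by its vector $c$ with norm at least $s$, using the precomputed values at norm at most $s$ (in particular at norm exactly $s$ and below) as base cases. The recursion only ever decreases coordinates, so every entry stays within the dominated region, and the query cost is $H_r(d,t,\ge s)\,(dr)^{O(1)}$.

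The main obstacle I expect is making the base-case boundary consistent. In Theorem~\ref{thm:subset:conv}, a query-phase entry at level above $s$ refers to entries at lower levels, and those are precomputed only when their size is at most $s$. Here the block recursion jumps down by up to $r$ in one step, so an entry of norm slightly above $s$ may reference entries of norm below $s$. This is fine, because all entries of norm at most $s$ are precomputed for all $j$. The thing to check carefully is that the definition of $f_j$ indexed by the multiplicity vector alone is well-defined: the restriction ``agreeing with $X$ on early blocks'' must respect invariance within the remaining blocks. It does, since the condition is symmetric within each block. Finally, the word-size and dictionary costs are absorbed into $(dr)^{O(1)}$.
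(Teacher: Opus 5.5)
Your proposal is correct, and it departs from the paper in how the dynamic program is organized.

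The paper keeps Yates' element-by-element recursion over all $dr$ ground elements verbatim. It handles the non-invariant intermediate tables $f_i$ by observing that $f_i(X)$ depends only on the multiplicity vector of $X$ plus the number of elements of $X$ in the already-fixed part $[jr,i)$ of the block containing $i$. So it stores $O(r)$ representatives per class and per $i$.

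You instead eliminate a whole block per step, so every intermediate table $f_j,h_j$ is a genuine function of the multiplicity vector. Your recurrences $f_j(x)=\sum_m\binom{x_j}{m}f_{j+1}(x-me_j)$ and $h_j(x)=\sum_m(-1)^m\binom{x_j}{m}h_{j+1}(x-me_j)$ follow by splitting on $S\cap[jr,(j+1)r)$. For $h_j$, note that $|X\setminus S|$ picks up exactly $m$ from block $j$.

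Your route makes well-definedness immediate, via invariance under within-block permutations. It also makes the level-$s$ boundary clean. The norm-$\le s$ tables are closed downward and precomputed for every $j$, so jumps of up to $r$ in norm cause no trouble. Query entries never leave the region dominated by $c$.

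The paper's route reuses the exact add/subtract formulas of Theorem~\ref{thm:subset:conv} with no multiplications, in keeping with its ``subtraction tricks'' theme. The price is the slightly fiddlier partial-block bookkeeping. Both give $(dr)^{O(1)}$ overhead per class.

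Your remark that classes live in $[r+1]^d$ rather than $[r]^d$ flags an off-by-one that the paper's proof glosses over. It is harmless: in the application all vectors have coordinates at most $r-1$, and both recursions only reference dominated vectors. So one may restrict all tables to $[r]^d$, or equivalently use blocks of size $r-1$.
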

\begin{proof}
We use the same algorithm as in the proof of Theorem~\ref{thm:subset:conv}
, with straightforward changes:

In step~1 of the preprocessing algorithm, 
it suffices to compute and store $\widehat{f}(X)$ for just
one representative set $X$ per equivalence class (since $\widehat{f}$ is multiplicity-invariant),
and store $f_i(X)$ for $O(r)$ representative sets $X$ (since the value depends on the multiplicity vector, along with
the cardinality of $X\cap [jr,(j+1)r)$ for the unique $j$ with $i\in [jr,(j+1)r)$).
Steps~2 and 3 are similar.
The number of table entries is $O(drH_r(d,\le s))$, since $H_r(d,\le s)$
bounds the number of equivalence classes of sets of size at most $s$.

The query algorithm is similar.  Here, the number of table entries is $O(drH_r(d,t,\ge s))$,
since $H_r(d,t,\ge s)$ bounds the number of equivalence classes of sets of size at least $s$ contained in the query set $C$.
\end{proof}

\begin{theorem}\label{thm:vec:conv}
Let $s\le t$.  Problem~\ref{prob:vec:conv} can be solved in total deterministic time $\left( n + H_r(d, \le\! s)\ +\ |C| H_r(d,t,\ge\!s)\right) (dr)^{O(1)}$.
\end{theorem}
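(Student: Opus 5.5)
The plan is to reduce Problem~\ref{prob:vec:conv} to the multiplicity-invariant subset convolution of Lemma~\ref{thm:subset:conv:mult}. We encode a vector $x\in[r]^d$ with $\|x\|_1=t_1$ as the multiplicity-equivalence class of sets $X\subseteq[dr]$ with $\mu_{j,X}=x_j$ for all $j\in[d]$. We then put $\AAA$ to be the union of the classes of all $a$ with $f(a)\neq0$, and give each set in the class of $a$ a weight $\tilde f(a)$ to be determined. We define $\BBB$ from $g$ in the same way. This gives $O(n)$ classes, each stored implicitly by one representative, so the collections are multiplicity-invariant as the lemma requires. For each query $c\in C$ we take one representative set $Z_c\subseteq[dr]$ with $\mu_{j,Z_c}=c_j$, for example the first $c_j$ elements of each block $[jr,(j+1)r)$. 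This requires $c_j\le r$, which holds since $c\in[r]^d$.

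Next we relate the subset convolution at $Z_c$ to $(f\star_r g)(c)$. A pair $(A,B)$ with $A\cap B=\emptyset$ and $A\cup B=Z_c$ is exactly a splitting of each block $Z_c\cap[jr,(j+1)r)$, of size $c_j$, into a part of size $a_j$ and a part of size $c_j-a_j$. For a fixed $a$ with $b=c-a$, the number of such splittings is $\prod_j\binom{c_j}{a_j}$. Therefore
\[ (\tilde f\star\tilde g)(Z_c)=\sum_{a+b=c}\tilde f(a)\tilde g(b)\prod_{j}\binom{c_j}{a_j}.\]
The factor $\prod_j \binom{c_j}{a_j}=\prod_j c_j!/(a_j!\,b_j!)$ factorizes. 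So we set $\tilde f(a)=f(a)/\prod_j a_j!$ and $\tilde g(b)=g(b)/\prod_j b_j!$, and multiply the answer by $\prod_j c_j!$. This recovers $(f\star_r g)(c)$ exactly.

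To avoid divisions, we can instead scale by $\prod_j (r!/a_j!)$ and similar integer factors, or simply work over the rationals. The numbers involved have $(dr)^{O(1)}$ bits, so arithmetic costs at most a $(dr)^{O(1)}$ factor.

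Applying Lemma~\ref{thm:subset:conv:mult} with the same $s$ gives time $\left(n+H_r(d,\le s)+|C|H_r(d,t,\ge s)\right)(dr)^{O(1)}$. The additional work of building representatives, computing the factorial weights, and rescaling outputs costs $(n+|C|)(dr)^{O(1)}$, which is absorbed in that bound.

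The main obstacle is the counting step: checking that the multiplicity-invariant encoding introduces exactly the multinomial overcount $\prod_j\binom{c_j}{a_j}$ and that this factor separates into per-input and per-output weights. The rest of the argument is bookkeeping around the lemma.
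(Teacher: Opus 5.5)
\paragraph{The normalization step is inverted.} Your reduction follows the same route as the paper: encode each vector as a multiplicity-equivalence class of subsets of $[dr]$, invoke Lemma~\ref{thm:subset:conv:mult}, and correct the multinomial overcount with factorial weights. Your counting identity
\[ (\tilde f\star\tilde g)(Z_c)=\sum_{a+b=c}\tilde f(a)\tilde g(b)\prod_j\binom{c_j}{a_j} \]
is also correct. The problem is the normalization you choose next, which goes the wrong way. With $\tilde f(a)=f(a)/\prod_j a_j!$ and $\tilde g(b)=g(b)/\prod_j b_j!$, each term becomes $f(a)g(b)\prod_j c_j!/(a_j!\,b_j!)^2$. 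Multiplying by $\prod_j c_j!$ then gives
\[ \sum_{a+b=c} f(a)g(b)\prod_j\binom{c_j}{a_j}^2 , \]
so you have squared the overcount instead of cancelling it. For example, take $d=1$, $r=4$, $c=(3)$, $a=(2)$, $b=(1)$, and $f(a)=g(b)=1$. The true value $(f\star_r g)(c)$ is $1$, but your procedure returns $\tfrac12\cdot 3\cdot 3!=9$. Your division-free variant with factors $\prod_j r!/a_j!$ has the same flaw, since those factors are proportional to $1/\prod_j a_j!$. As written, the claim ``this recovers $(f\star_r g)(c)$ exactly'' is false.

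\paragraph{The fix.} Invert both scalings, which is exactly the paper's proof. Give every set in the class of $a$ the weight $f(a)\prod_j a_j!$, and every set in the class of $b$ the weight $g(b)\prod_j b_j!$. Then for each pair with $a+b=c$,
\[ f(a)\prod_j a_j!\cdot g(b)\prod_j b_j!\cdot\prod_j\binom{c_j}{a_j}=f(a)g(b)\prod_j c_j! , \]
and you divide the output at $Z_c$ by $\prod_j c_j!$. This also removes any need for rational arithmetic on the inputs: they are only multiplied by integers, and there is a single division by a known factor per query. With this correction, the rest of your argument goes through as written. That includes the representatives $Z_c$ with $|Z_c|=t$, the $O(n)$ classes, applying the lemma with the same $s$, and the $(n+|C|)(dr)^{O(1)}$ bookkeeping.
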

\begin{proof}
For each vector $a=(a_0,\ldots,a_{d-1})\in [r]^d$ with a nonzero $f(a)$,
additionally define $f'(A)=f(a) \cdot (a_0!\cdots a_{d-1}!)$ for all sets $A\subseteq [dr]$ with $\mu_{j,A}=a_j$ for each $j\in [d]$.
For each vector $b=(b_0,\ldots,b_{d-1})\in [r]^d$ with a nonzero $g(b)$,
define $g'(B)=g(b) \cdot (b_0!\cdots b_{d-1}!)$ for all sets $B\subseteq [dr]$ with $\mu_{j,B}=b_j$ for each $j\in [d]$.
Then $f'$ and $g'$ are multiplicity-invariant.
For $c=(c_0,\ldots,c_{d-1})\in [r]^d$, we have 
$(f\star_r g)(c)= (f'\star g')(c) / (c_0!\cdots c_{d-1}!)$.
\end{proof}

\subsection{Reducing 3SUM to Convolution in $[r]^d$}

\newcommand{\Pvecconv}{P_{\mbox{\scriptsize\rm vec-conv}}}
\newcommand{\Qvecconv}{Q_{\mbox{\scriptsize\rm vec-conv}}}

Now we modify the algorithm in Section~\ref{sec:3sum:to:subset:conv} 
using convolution over $[r]^d$.  When working in base $r$, 
we no longer seek disjointness of subsets corresponding to two numbers.  Instead, we just want the two numbers to be
summable without carries, so that summing the two numbers
is equivalent to summing
their corresponding vectors.
With a larger base $r$, this turns out to happen with greater probability.

\begin{theorem}\label{thm:3sum:to:vec:conv}
Suppose Problem~\ref{prob:vec:conv} can be solved in
$O(\Pvecconv(n,d,t,r) + |\CCC|\Qvecconv(n,d,t,r))$ time.

Let $r$ be a constant.
Given sets $A,B,C\subseteq [n^{O(1)}]$ with $N=|A|+|B|$ and $\nnn=\sqrt{|A||B|}$,
we can determine whether $c\in A+B$ for each $c\in C$, with error probability $O(1/N)$,
in 
\[ \OO\left(\nnn^{2(1-\log_\rho r)}\cdot \big(\Pvecconv(N,d,t,r)+N\ +\ |C| (\Qvecconv(N,d,t,r)+1)\big)\right)\] 
where $t=(r-1)d/2$, and $d=2\log_\rho\nnn+O(\log\log N)$ for $\rho := r^2/(r!)^{1/r}$.  Note that $\rho=\Theta(r)$, and so 
$1-\log_\rho r=O(1/\log r)$.
\end{theorem}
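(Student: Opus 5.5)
We will mirror the proof of Theorem~\ref{thm:3sum:to:subset:conv}, replacing bit vectors by base-$r$ digit vectors and replacing disjointness by carry-freeness. We set $p_0=r^d$ and pick a random prime $p\in(p_0,2p_0]$. We handle $A_\bad$ and $B_\bad$ exactly as before; their expected sizes are $\OO(|A|^2/p_0)$ and $\OO(|B|^2/p_0)$, so this costs $\OO(N\nnn^2/p_0)$, which is dominated. We then run $\mu$ trials with the same 3-wise independent maps $f,g,h$. Let $\pi(x)\in[r]^d$ be the base-$r$ digit vector of $x<r^d$. In each trial we keep $a\in A^*$ when $f(a)<r^d$ and $\|\pi(f(a))\|_1=t_1$, keep $b\in B^*$ when $\|\pi(g(b))\|_1=t_2$, and keep $c\in C^*$ when $\|\pi(h(c))\|_1=t$. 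We take $t_1=t_2=t/2$, with $t=(r-1)d/2$ rounded. We compute $\sigma(c)$ and $\tau(c)$ via $O(1)$ instances of Problem~\ref{prob:vec:conv}, expanding $(a+b-c)^2$ as before. The key identity is that if $\pi(x)+\pi(y)\in[r]^d$ (no carries), then $\pi(x+y)=\pi(x)+\pi(y)$. Conversely, any pair $(a,b)$ counted with $\pi(f(a))+\pi(g(b))=\pi(h(c))$ is automatically carry-free, so it satisfies $f(a)+g(b)=h(c)$ and hence $a+b\equiv c\pmod p$. Soundness (correct ``yes'' answers) then follows exactly as before from the sum-of-squares argument.

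For completeness, fix $c=a+b$ with $b\notin B_\bad$ and a good $p$, meaning $|L|\le \nnn^2\lg^2N/p_0$. We sum over pairs $(x,y)$ with $|\pi(x)|_1=|\pi(y)|_1=t/2$ and $\pi(x)+\pi(y)\in[r]^d$; let $M$ be the number of such pairs. Then $\Pr[f(a)=x,\,g(b)=y]=1/p^2$, and conditioned on this, a bad witness $(a',b')\in L$ requires $\pi(f(a'))$ to be dominated by $z:=\pi(x)+\pi(y)$. The number of such $x'$ is at most $D(z)=\prod_j(z_j+1)\le r^d$. So the conditional failure probability is at most $|L|\,D/p$, and we need this to be at most $1/4$. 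This constrains $p_0^2\gtrsim \nnn^2 r^d\lg^2N$, which already suggests the parameter choice. The success probability per trial is then $\Omega(M/p^2)$, so $\mu=\OO(p_0^2/M)$ trials suffice.

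The main obstacle is the counting. We must show that $M$, restricted to $z$ with moderate $D(z)$, is about $r^{2d}/\rho^{d}$ up to polynomial factors, and choose $d$ so that $p_0^2/M=\OO(\nnn^{2(1-\log_\rho r)})$. My plan is to count via the distribution of $z$. The number of carry-free pairs summing to a fixed $z$ is $\prod_j \min(z_j+1,\,2r-1-z_j)$. I would restrict to $z$ with a prescribed empirical digit profile (a type class) and maximize by the method of types. The constant $\rho=r^2/(r!)^{1/r}$ hints that the optimal profile uses roughly uniform $z_j\in[r]$ (mean $(r-1)/2$, matching $t=(r-1)d/2$), and that the product $\prod_j (z_j+1)$ over a uniform profile gives $(r!)^{d/r}$. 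So I would take $z$ to be a permutation-type vector with each digit value appearing $d/r$ times. Then $D(z)=(r!)^{d/r}$, and the number of such $z$ is about $r^d$. Each such $z$ has about $(r!)^{d/r}$ splittings, and the constraint $\|\pi(x)\|_1=t/2$ costs only a $\mathrm{poly}(d)$ factor. Hence $M\approx r^d(r!)^{d/r}$ with $D\le (r!)^{d/r}$. Choosing $p_0=r^d$ with $r^{2d}=\nnn^2(r!)^{d/r}\,\mathrm{polylog}$, that is $\rho^d=\nnn^2\,\mathrm{polylog}$, gives $d=2\log_\rho\nnn+O(\log\log N)$. This yields $\mu=\OO(p_0^2/M)=\OO(r^d/(r!)^{d/r})=\OO((r^2/(r!)^{1/r})^d/r^d)=\OO(\nnn^2/r^d)=\OO(\nnn^{2-2\log_\rho r})$, as claimed.

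Finally, the running time is $\mu$ times (preprocessing cost $+N$ $+$ $|C^*|$ queries). Since $\Pr[c\in C^*]\le 1$, we bound the query term crudely by $|C|$, giving the stated bound. The error probability is reduced by $O(\log(N|C|))$ repetitions, as before.
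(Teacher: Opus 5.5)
Your proposal is correct and follows essentially the paper's proof: the same hashing and bad-bucket handling, the restriction (in the analysis) to perfectly balanced $z$, which have exactly $(r!)^{d/r}$ dominated vectors, and the choice $r^{2d}\approx 4\nnn^2(r!)^{d/r}\lg^2 N$, giving $\mu=\OO(\nnn^2/r^d)$. Your deviations are harmless: you fix $t_1=t_2\approx t/2$ instead of looping over all $t_1+t_2=t$, which costs only a $1/(t+1)$ factor by symmetry and unimodality of $\prod_j(1+q+\cdots+q^{z_j})$, and you select queries by $\|\pi(h(c))\|_1=t$ rather than by balancedness; just note that your intermediate target ``$M\approx r^{2d}/\rho^d$'' should read $r^d(r!)^{d/r}=r^{3d}/\rho^d$, which is the value your final computation actually uses.
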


\begin{proof} 
Assume $d$ is divisible by $r$.  We modify the proof of Theorem~\ref{thm:3sum:to:subset:conv}.

\paragraph{Algorithm.} Let $p_0=r^d$.  
Pick a random prime $p$ in $(p_0,2p_0]$.  
Let 
$A_{\bad}=\{a\in A: \exists a'\in A\ \mbox{with}\ a'\equiv a\pmod p\ \mbox{and}\ a'\neq a\}$
and 
$B_{\bad}=\{b\in B: \exists b'\in B\ \mbox{with}\ b'\equiv b\pmod p\ \mbox{and}\ b'\neq b\}$ (these are elements that are in non-singleton buckets when we hash modulo $p$).
For each $c\in C$, if 
$c\in A_\bad+B$ or 
$c\in A+B_\bad$, then report ``yes'' for $c$.

Next, we repeat the following $\mu$ times, for a parameter $\mu$ to be determined:
\begin{enumerate}
\item
Pick 
random numbers $\xi,\alpha,\beta\in [p]$ with $\xi\neq 0$.
Define functions $f,g,h:\Z\rightarrow [p]$:
\begin{eqnarray*}
f(a) &=& (a\xi+\alpha)\bmod p\\
g(b) &=& (b\xi+\beta)\bmod p\\
h(c) &=& (c\xi+\alpha+\beta)\bmod p.
\end{eqnarray*}
Let $\pi:[r^d]\rightarrow [r]^{d}$ where $\pi(x)$ is the vector formed by the base-$r$
representation of~$x$.
\item
Let $A^*=\{a\in A\setminus A_\bad: f(a)<r^d\}$ and
$B^*=\{b\in B\setminus B_\bad: g(b)<r^d\}$.
\item
For each $c\in C$ such that $h(c)<r^d$ and $\pi(h(c))$ is \emph{perfectly balanced}, i.e.,
each number in $[r]$ occurs exactly $d/r$ times in the vector, compute
\begin{eqnarray*}
\sigma(c) &=& |\{(a,b)\in A^*\times B^*:\ \pi(f(a))+\pi(g(b))=\pi(h(c))\}|\\[3pt]
\tau(c) &=& \sum_{(a,b)\in A^*\times B^*:\ \pi(f(a))+\pi(g(b))=\pi(h(c))} (a+b-c)^2.
\end{eqnarray*}
If $\sigma(c)>0$ and $\tau(c)=0$, then report ``yes'' for $c$. 
\end{enumerate}
If ``yes'' has not been reported for $c$ in any of the trials, then report ``no''.

\paragraph{Running time analysis.}
The analysis of the algorithm is similar, so we will only point out the differences.

Initially, sorting $A_\bad+B$ and $A+B_\bad$  takes
$\OO(N\nnn^2/p_0) = \OO(N\nnn^{2(1-\log_\rho r)})$ time, as $p_0=r^d$ and 
$d=2\log_\rho\nnn + O(\log\log N)$.

In step~3, for the convolutions, we have to loop through all combinations of $t_1,t_2$
with $t_1+t_2=t$, but this only increases the running time by an $O(t)$ factor.

Note that a perfectly balanced vector has $L_1$ norm $(d/r)\sum_{i=0}^{r-1}i=(r-1)d/2$, and
the number of perfectly balanced vectors is equal to the multinomial coefficient $\frac{d!}{((d/r)!)^r} = r^d/d^{\Theta(r)}=\widetilde{\Theta}(r^d)$.
The selected entries are $C^*$ are the vectors in $C$ that are perfectly balanced, and so $\Ex[C^*] = \widetilde{\Theta}(\frac{r^d}{p}|C|)=\widetilde{\Theta}(|C|)$.
So, the running time now becomes
$\OO\left(\mu\cdot \left(\Pvecconv(N,d,t,r)+N +|C|
\Qvecconv(N,d,t,r)\right)\right)$.
Eventually we will choose $\mu = \OO(\nnn^{2(1-\log_\rho r)})$, and this will give the time bound stated in the theorem.

\paragraph{Error probability analysis.}
The mapping $\pi$ satisfies the property: if
$\pi(x)+\pi(y)\in [r]^d$, then $\pi(x+y)=\pi(x)+\pi(y)$ (this corresponds to ``no-carry'' addition of base-$r$ numbers).

Observe that if $z\in [r]^d$ is a perfectly balanced vector,
the number of vectors $x'\in [r]^d$ dominated by $z$ is
equal to $\prod_{i=0}^{r-1} (i+1)^{d/r} = (r!)^{d/r}$.
Equation (\ref{eqn1}) then becomes 
\[ \Pr[\tau(c)>0 \mid (f(a)=x)\wedge (g(b)=y)]\ \le\
|L|\cdot (r!)^{d/r} \frac{1}{p_0}\ \le\ (r!)^{d/r}\cdot \frac{\nnn^2\lg^2 N}{p_0^2}\ \le\ \frac{1}{4}
\]
by choosing $d$ so that $p_0^2=r^{2d} \approx  4(\nnn^2\lg^2 N) (r!)^{d/r}$.
This implies that $(r^2/(r!)^{1/r})^d = \widetilde{\Theta}(\nnn^2)$.
So, $d=2\log_\rho\nnn+O(\log\log N)$ where $\rho=r^2/(r!)^{1/r}$.

Equation (\ref{eqn2}) becomes
\[ \Pr[(\sigma(c)>0)\wedge (\tau(c)=0)]\ \ge\ \frac{d!}{((d/r)!)^r} (r!)^{d/r} \frac{1}{p^2}\cdot\frac{3}{4}\ \ge\
\widetilde{\Omega}\left(\frac{d!}{((d/r)!)^r}\cdot\frac{1}{\nnn^2}\right)\ \ge\
\widetilde{\Omega}\left( \frac{r^d}{\nnn^2}\right).
\]
We now choose $\mu = \OO\left(\frac{\nnn^2}{r^d}\right)=\OO(\nnn^{2(1-\log_\rho r)})$.  Remaining parts of the proof are similar.
\end{proof}

Applying Theorem~\ref{thm:vec:conv}, we obtain an overall time bound of 
\[ \OO\left(\nnn^{2(1-\log_\rho r)}\cdot \big(H_r(d,\le\!s)+N\ +\ |C| (H_r(d,\ge\!s,(r-1)d/2)+1)\big)\right)\] 
for $d=2\log_\rho\nnn+O(\log\log N)$.

Let $s=(1-\delta)(r-1)d/2$ for a sufficiently small constant $\delta>0$.
Observe that $H_r(d,\le\!s)\le r^d 2^{-\Theta(\delta^2 d)}= \widetilde{\Theta}(\nnn^{2\log_\rho r- \Theta(\delta^2/\log\rho)})$
(since the probability that a sum of $d$ independent, uniform random variables in $[r]$
is at most $(1-\delta)(r-1)d/2$ is at most $2^{-\Theta(\delta^2 d)}$ by a Chernoff bound).
On the other hand,
$H_r(d,\ge\!s,(r-1)d/2)$ is bounded by ${(r-1)d/2\choose \ge s} \le {(r-1)d/2\choose O(\delta rd)}
= (1/\delta)^{O(\delta rd)} = \nnn^{O(\delta r\log (1/\delta)/\log \rho)}$.
Thus, the time bound becomes
\[
O\left( \nnn^{2-\Theta(\delta^2/\log r)} + (N+|C|) \nnn^{O(1/\log r + \delta r\log(1/\delta)/\log r)} \right).
\]
For example, setting $\delta=\eps^{0.49}$ and $r=1/\eps^{0.48}$ implies:

\begin{corollary}
Given sets $A,B,C\subseteq [N^{O(1)}]$ with $N=|A|+|B|$ and $\nnn=\sqrt{|A||B|}$,
we can determine whether $c\in A+B$ for each $c\in C$, with error probability $O(1/N)$,
in $O(\nnn^{2-\eps} + (N+|C|)\nnn^{O(1/\log(1/\eps))})$ time for a sufficiently small constant $\eps>0$.
\end{corollary}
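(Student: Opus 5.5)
The corollary should follow by plugging the parameters $\delta=\eps^{0.49}$ and $r=1/\eps^{0.48}$ into the time bound derived just before it,
\[
O\left( \nnn^{2-\Theta(\delta^2/\log r)} + (N+|C|)\, \nnn^{O(1/\log r + \delta r\log(1/\delta)/\log r)} \right),
\]
which came from combining Theorem~\ref{thm:3sum:to:vec:conv} with Theorem~\ref{thm:vec:conv} at $s=(1-\delta)(r-1)d/2$. Since $\eps$ is a small constant, $r$ is a large constant, so Theorem~\ref{thm:3sum:to:vec:conv} applies and the $(dr)^{O(1)}$ and polylogarithmic factors are absorbed. I will treat that displayed bound as given and do the exponent arithmetic.

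For the preprocessing term, $\delta^2/\log r = \eps^{0.98}/(0.48\lg(1/\eps))$. This is much larger than $\eps$ once $\eps$ is small, because $\eps^{0.98}/\eps = \eps^{-0.02}$ grows faster than $\log(1/\eps)$. So for $\eps$ below some absolute constant, the first term is at most $\nnn^{2-\eps}$, and the hidden $\Theta$-constant is swallowed by this slack.

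For the per-element term, $1/\log r = O(1/\log(1/\eps))$. Also $\delta r\log(1/\delta) = \eps^{0.01}\cdot 0.49\lg(1/\eps)$, which tends to $0$, so $\delta r\log(1/\delta)/\log r = O(\eps^{0.01})$. This is $o(1/\log(1/\eps))$, since $\eps^{0.01}\log(1/\eps)\to 0$. The exponent is therefore $O(1/\log(1/\eps))$, which gives the stated $(N+|C|)\,\nnn^{O(1/\log(1/\eps))}$.

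The only point needing care is the regime in which the earlier estimates hold. The Chernoff estimate $H_r(d,\le s)\le r^d2^{-\Theta(\delta^2 d)}$ must hold with constants uniform in $r$, and the relative deviation is $\delta$, so I will check that the exponent is $\Theta(\delta^2 d)$ rather than $\Theta(\delta^2 d/r)$. Bounded variables in $[r]$ have variance $\Theta(r^2)$ and mean $\Theta(r)$ per coordinate, so a deviation of $\delta\cdot\Theta(rd)$ gives exponent $\Theta(\delta^2 d)$ by Hoeffding, which is fine. I also need ${(r-1)d/2\choose \le \delta(r-1)d/2}\le(1/\delta)^{O(\delta rd)}$, which requires $\delta<1/2$; this holds. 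Finally, the error probability $O(1/N)$ is inherited directly from Theorem~\ref{thm:3sum:to:vec:conv}.
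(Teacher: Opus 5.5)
Your proposal is correct and matches the paper, which also obtains the corollary just by substituting $\delta=\eps^{0.49}$ and $r=1/\eps^{0.48}$ into the preceding bound. Your exponent arithmetic ($\delta^2/\log r=\Theta(\eps^{0.98}/\log(1/\eps))\gg\eps$, and $\delta r\log(1/\delta)/\log r=O(\eps^{0.01})=o(1/\log(1/\eps))$) and your check that the Hoeffding exponent is $\Theta(\delta^2 d)$ uniformly in $r$ supply exactly the details the paper leaves implicit.
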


The $(2k+1)$-SUM problem reduces to 3SUM in the unbalanced case when $|A|=n^k$, $|B|=n$, and $|C|=n^k$ (with $N=O(n^k)$ and $\nnn=O(n^{(k+1)/2})$), and so we immediately obtain:

\begin{corollary}
For any constant $k\ge 1$, the $(2k+1)$-SUM problem for polynomially bounded integers
can be solved in $O(n^{k+1-\eps_k})$ for some constant $\eps_k=2^{-\Theta(k)}$ by a Monte Carlo randomized algorithm.
\end{corollary}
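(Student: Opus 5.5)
The plan is to reduce $(2k+1)$-SUM to an unbalanced instance of the 3SUM membership problem and then invoke the preceding corollary. We use the standard formulation in which the input is $2k+1$ sets $S_1,\ldots,S_{2k+1}\subseteq[n^{O(1)}]$ of size $n$, and we ask whether $x_1+\cdots+x_{2k+1}=0$ with $x_i\in S_i$. The single-set version reduces to this by the usual color-coding or shifting tricks, losing only constant factors.

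\textbf{Construct the instance.} Let
\[ A=S_1+\cdots+S_k,\qquad B=S_{k+1},\qquad C=-(S_{k+2}+\cdots+S_{2k+1}). \]
Before building $A$ and $C$, shift each $S_i$ by a common offset $M$ so that all elements are nonnegative, and shift the target accordingly. Then $A$, $B$ and $C$ all lie in $[N^{O(1)}]$. The sets $A$ and $C$ can be built by brute-force enumeration in $O(n^k)$ time, after duplicate removal by hashing or sorting with $\OO(n^k)$ overhead. Their sizes satisfy $|A|,|C|\le n^k$ and $|B|=n$.

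\textbf{Apply the corollary.} A solution exists iff some $c\in C$ lies in $A+B$. In the corollary's notation, $N=|A|+|B|=O(n^k)$ and $\nnn=\sqrt{|A||B|}\le n^{(k+1)/2}$. The corollary's running time is
\[ O\bigl(\nnn^{2-\eps}+(N+|C|)\,\nnn^{O(1/\log(1/\eps))}\bigr). \]
The first term is $n^{(k+1)(1-\eps/2)}=n^{k+1-(k+1)\eps/2}$. The second term is $n^{k+O((k+1)/\log(1/\eps))}$.

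\textbf{Choose $\eps$.} This is the only real step: the second term must stay below $n^{k+1-\eps_k}$. Pick $\eps$ small enough that $c_0(k+1)/\log(1/\eps)\le 1/2$, where $c_0$ is the hidden constant in the exponent. This needs $\log(1/\eps)=\Theta(k)$, so $\eps=2^{-\Theta(k)}$. The second term is then at most $n^{k+1/2}$, and the first term is $n^{k+1-\Omega(k\eps)}$.

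Hence $\eps_k=\min\{1/2,\Theta(k\eps)\}$. This is $2^{-\Theta(k)}$, since $k\cdot 2^{-\Theta(k)}=2^{-\Theta(k)}$ after adjusting the constant. The error probability $O(1/N)$ makes the algorithm Monte Carlo, and the $O(n^k)$ construction cost is dominated by $n^{k+1-\eps_k}$, which gives the claimed bound.

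The main obstacle is bookkeeping: tracking the dependence of the hidden constant in $\nnn^{O(1/\log(1/\eps))}$ on $\eps$. The corollary's derivation sets $r=1/\eps^{0.48}$ and $\delta=\eps^{0.49}$, so this exponent is $O(1/\log r)$ with an absolute constant, which makes the choice $\eps=2^{-\Theta(k)}$ valid.
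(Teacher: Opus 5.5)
Your proposal is correct and follows the paper's route exactly: the paper reduces $(2k+1)$-SUM to the unbalanced 3SUM instance with $|A|=|C|=n^k$, $|B|=n$ (so $N=O(n^k)$, $\nnn=O(n^{(k+1)/2})$) and applies the preceding corollary. Your choice of $\log(1/\eps)=\Theta(k)$, which keeps $(N+|C|)\nnn^{O(1/\log(1/\eps))}$ below $n^{k+1}$ and gives $\eps_k=\Theta(k\eps)=2^{-\Theta(k)}$, is exactly the bookkeeping the paper leaves implicit when it says the result follows ``immediately.''
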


\section{Real APSP Derandomized}\label{sec:real}

\newcommand{\RANK}{\mbox{\rm rank}}
\newcommand{\SIZE}{\mbox{\rm size}}
\newcommand{\SUMRANK}{\mbox{\rm sum-rank}}
\newcommand{\AVGRANK}{\mbox{\rm avg-rank}}

In this section, we present a truly subcubic-time deterministic algorithm for APSP for real input.

It is well known (by repeated squaring) that the real APSP problem reduces to computing the
$(\min,+)$-product of two $n\times n$ real matrices, which in turn
reduces to $n/d$ instances of $(\min,+)$-product of $n\times d$ and $d\times n$ matrices:
 
\begin{problem}[\sf Rectangular $(\min,+)$ matrix multiplication]\label{prob:orig}
Given an $n\times d$ real matrix $A$ and a $d\times n$ matrix $B$,
compute $C[i,j]=\min_{k\in[d]}(A[i,k]+B[k,j])$ for each $i,j\in[n]$.
\end{problem}

For simplicity of presentation, we assume that there are no degeneracies (no two pairs of input numbers have equal sum); degenerate cases can be addressed by perturbation or by straightforward modifications.

Chan, Vassilevska Williams, and Xu's previous approach~\cite{ChanWX22} considered an intermediate ``verification'' problem:

\begin{problem}[\sf Verification]\label{prob:decis}
We are given an $n\times d$ real matrix $A$ and a $d\times n$ matrix $B$ 
and indices $p_{ij}\in [d]$ for each $i,j\in[n]$.   For each $i,j\in[n]$, we want to 
decide whether there exists an index
$p'_{ij}\in[d]$ satisfying 
\[ A[i,p'_{ij}]+A[p'_{ij},j] < A[i,p_{ij}]+A[p_{ij},j], \]
and if yes, report one such $p'_{ij}$.
\end{problem}

If the answer is no, then $p_{ij}$ is indeed the correct index for the $(\min,+)$-product entry $C[i,j]$.

Chan, Vassilevska Williams, and Xu showed that this verification problem can be reduced to Lopsided AE Sparse Triangle (and thus to selected-entries rectangular matrix multiplication), by adapting simple ideas found
in previous slightly subcubic APSP algorithms such as \cite{Fredman76,Chan08,Chan10,Williams18} (notably, ``Fredman's trick'').

To reduce the original problem to this verification problem, they applied a well-known random sampling approach: take a random subset $R\subseteq [d]$ of a certain  size $r$, compute $\arg\min_{k\in R}(A[i,k]+B[k,j])$ for each $i,j$ (by brute force or by recursion).  This gives us good candidates $p_{ij}$ for verification, since for each $i,j$, the number of $p'$ with a smaller $A[i,p']+B[p',j]$ is expected to be around $d/r$.

Unfortunately, we do not see a good way to derandomize this sampling approach. (Essentially, we want a small hitting set $R$ for a certain set system which we cannot afford to generate explicitly.)  We have unsuccessfully tried various  sophisticated ideas, such as the method of conditional probabilities, or limited independence.  In the integer case, Chan and Xu~\cite{ChanX24} suggested halving the input values, and recursively solving the problem in a smaller integer universe; this then gives us good candidates for verification.  However, this approach inherently does not work for real input.

\subsection{New Counting Problem}

We describe a simple approach that gives a deterministic reduction. 
We first define the following ``halving'' problem, which is stronger than the verification problem:
 
\begin{problem}[\sf Halving]\label{prob:half}
We are given an $n\times d$ real matrix $A$ and a $d\times n$ matrix $B$ 
and indices $p_{ij}\in [d]$ for each $i,j\in[n]$ with $d\le n$.  For each $i,j\in[n]$, we want to
compute a new index $p'_{ij}\in[d]$ satisfying
\[ \RANK_{ij}(p'_{ij}) \le \up{\RANK_{ij}(p_{ij})/2},
\]
where $\RANK_{ij}(p)$ denotes the rank of $A[i,p]+B[p,j]$ in $\{A[i,k]+B[k,j]: k\in [d]\}$.
\end{problem}
 
By iterating $\log d$ times, we can immediately reduce the rectangular $(\min,+)$ matrix multiplication
problem to the halving problem:
 
\begin{lemma}\label{lem:orig:to:half}
If Problem~\ref{prob:half} can be solved in $\Thalf(n,d)$ time, then
Problem~\ref{prob:orig} can be solved in $O(\Thalf(n,d)\log d)$ time.
\end{lemma}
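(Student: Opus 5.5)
We start from an arbitrary initial assignment, say $p_{ij}=0$ for all $i,j\in[n]$. This trivially satisfies $\RANK_{ij}(p_{ij})\le d$, and producing it costs $O(n^2)$ time.

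We then invoke the assumed halving algorithm repeatedly. In round $\ell$ we feed it the same matrices $A,B$ together with the current indices $p_{ij}$, and replace each $p_{ij}$ by the returned $p'_{ij}$. The invariant we maintain is $\RANK_{ij}(p_{ij})\le R_\ell$, where $R_0=d$ and $R_{\ell+1}=\up{R_\ell/2}$. Since $\RANK_{ij}(p'_{ij})\le \up{\RANK_{ij}(p_{ij})/2}\le \up{R_\ell/2}$, the invariant is preserved. The recurrence satisfies $R_\ell\le \up{d/2^\ell}$; this follows by a short induction using $\up{\up{x}/2}=\up{x/2}$ for real $x\ge 0$, or more simply from $R_\ell\le d/2^\ell+1$ together with integrality. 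Hence after $L=\up{\lg d}$ rounds we have $R_L=1$. At that point every $p_{ij}$ has rank $1$, meaning it attains $\min_k(A[i,k]+B[k,j])$. We then output $C[i,j]=A[i,p_{ij}]+B[p_{ij},j]$ in $O(n^2)$ time.

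For the running time, there are $O(\log d)$ calls, each costing $\Thalf(n,d)$, plus $O(n^2)$ for initialization and the final evaluation. Any halving algorithm must write all $n^2$ outputs, so $\Thalf(n,d)=\Omega(n^2)$, and the total is $O(\Thalf(n,d)\log d)$. The case $d=1$ is trivial and requires no calls.

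There is no real obstacle here. The only care points are the ceiling arithmetic (verifying that the iteration actually reaches rank $1$ rather than stalling at $2$, which holds because $\up{2/2}=1$) and the observation that the nondegeneracy assumption makes the rank well defined. The condition $d\le n$ in Problem~\ref{prob:half} is inherited unchanged, since $A$ and $B$ never change across rounds.
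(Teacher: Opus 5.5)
Your argument is correct and is the same reduction the paper has in mind (the paper gives only the one-line ``iterate $\log d$ times''): start from arbitrary indices, apply the halving algorithm $\up{\lg d}$ times, and absorb the $O(n^2)$ initialization and output cost using $\Thalf(n,d)=\Omega(n^2)$. One small slip is in your side remark: the non-strict bound $R_\ell\le d/2^\ell+1$ plus integrality only gives $R_L\le 2$, so you would need the strict version $R_\ell<d/2^\ell+1$ (which does hold); your main justification via $\up{\up{x}/2}=\up{x/2}$ already gives $R_L=\up{d/2^L}=1$, so nothing is lost.
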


To find an element with low rank, the key new idea is to find an interval with low
\emph{average} rank.  This motivates the following ``counting'' problem:
 
\begin{problem}[\sf Counting]\label{prob:count}
We are given an $n\times d$ real matrix $A$ and a $d\times n$ matrix $B$,
indices $p_{ij}\in [d]$ for each $i,j\in[n]$,
and dyadic\footnote{
A \emph{dyadic interval} is an interval of the form $(a2^i,(a+1)2^i]$ for integers $a$ and $i$.
Without loss of generality, we may assume that $d$ is a power of 2.
} intervals $I_{ij}\subseteq [d]$ of length at least $d^{1-\delta}$ for a given parameter $\delta>0$.
Let \[ L_{ij}(p,I)\ :=\ \{k\in I: A[i,k]+B[k,j] < A[i,p]+B[p,j]\}. \]
For each $i,j\in [n]$, we want to
compute the following counts:
\begin{eqnarray*}
\SIZE_{ij}(p_{ij},I_{ij}) &:=& |L_{ij}(p_{ij},I_{ij})|\ 
\\[2pt]
\SUMRANK_{ij}(p_{ij},I_{ij}) &:=& \sum_{k\in L_{ij}(p_{ij},I_{ij})} \RANK_{ij}(k)\\
&=& |\{(k,k')\in I_{ij}\times[d]:
A[i,k']+B[k',j]\le A[i,k]+B[k,j] < A[i,p_{ij}]+B[p_{ij},j]\}|.
\end{eqnarray*}
\end{problem}

Note subtly that in the definition of $\SUMRANK_{ij}(p_{ij},I_{ij})$, we are summing
the ``global'' ranks $\RANK_{ij}(k)$ relative to $[d]$, not ranks relative to $I_{ij}$ (and thus
the index $k'$ runs over $[d]$, not $I_{ij}$).  From the above counts, we can then get:
\[ \AVGRANK_{ij}(p_{ij},I_{ij})\ :=\ \SUMRANK_{ij}(p_{ij},I_{ij})/\SIZE_{ij}(p_{ij},I_{ij}).
\]

We now show how to reduce the halving problem to the counting problem by a simple binary search: we just
repeatedly shrink intervals while maintaining low average rank.
 
\begin{lemma}\label{lem:half:to:count}
For any constant $\delta>0$, if Problem~\ref{prob:count} can be solved in $\Tcount(n,d,\delta)$ time, then
Problem~\ref{prob:half} can be solved in $\Thalf(n,d)=O(\Tcount(n,d,\delta)\log d  + d^{1-\delta}n^2)$ time.
\end{lemma}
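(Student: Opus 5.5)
We will maintain, for each pair $(i,j)$, a dyadic interval $I_{ij}$. Throughout the search we keep the invariant that $\SIZE_{ij}(p_{ij},I_{ij})\ge 1$ and $\AVGRANK_{ij}(p_{ij},I_{ij})\le \RANK_{ij}(p_{ij})/2$. We start with $I_{ij}=[d]$. The set $L_{ij}(p_{ij},[d])$ consists exactly of the indices of rank $1,\dots,\RANK_{ij}(p_{ij})-1$, so its average rank is $\RANK_{ij}(p_{ij})/2$ and the invariant holds initially. The degenerate case $\RANK_{ij}(p_{ij})=1$ is detected by $\SIZE=0$, and then we simply output $p'_{ij}=p_{ij}$.

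In each round we split every current $I_{ij}$ into its two dyadic halves $I^{(0)}_{ij}$ and $I^{(1)}_{ij}$. We make two calls to the counting oracle of Problem~\ref{prob:count}, one for the left halves and one for the right halves; these halves have length at least $d^{1-\delta}$ as long as the rounds continue. Since $L_{ij}(p_{ij},I_{ij})$ is the disjoint union of $L_{ij}(p_{ij},I^{(0)}_{ij})$ and $L_{ij}(p_{ij},I^{(1)}_{ij})$, both $\SIZE$ and $\SUMRANK$ are additive over the two halves. Hence the average rank over $I_{ij}$ is a convex combination of the average ranks of the halves that have nonzero size. So at least one nonempty half has average rank at most that of $I_{ij}$, and we replace $I_{ij}$ by it, which preserves the invariant.

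We stop once the intervals have length $d^{1-\delta}$. This takes $O(\delta\log d)=O(\log d)$ rounds of $O(1)$ oracle calls each, for a total of $O(\Tcount(n,d,\delta)\log d)$. At that point we scan each final $I_{ij}$ by brute force in $O(d^{1-\delta})$ time per pair, computing $A[i,k]+B[k,j]$ for all $k\in I_{ij}$. From these values we output the index $p'_{ij}\in L_{ij}(p_{ij},I_{ij})$ that minimizes the sum. This costs $O(d^{1-\delta}n^2)$ overall.

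For correctness, $p'_{ij}$ has the minimum value in $L_{ij}(p_{ij},I_{ij})$, and hence the minimum global rank within that set. Its rank is therefore at most the set's average rank, which is at most $\RANK_{ij}(p_{ij})/2\le\up{\RANK_{ij}(p_{ij})/2}$.

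The main point needing care is the invariant. It is the averaging argument that makes a deterministic binary search possible. Counting only the \emph{size} of each half and recursing on the half with more small elements would not bound the rank, because the global ranks of elements in a half could all be near $\RANK_{ij}(p_{ij})$. Tracking $\SUMRANK$ with global ranks is exactly what makes the convex-combination step go through. Minor bookkeeping includes treating empty halves (never choose a half with size $0$; a nonempty half always exists because the parent is nonempty) and assuming $d$ is a power of $2$ so that all intervals stay dyadic.
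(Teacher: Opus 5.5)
Your proposal is correct and follows the paper's proof essentially step by step. Both maintain a dyadic interval whose set $L_{ij}$ is nonempty and has average global rank at most $\mathrm{rank}_{ij}(p_{ij})/2$, use additivity of size and sum-rank to descend into a good half for $O(\log d)$ rounds, and finish with an $O(d^{1-\delta})$ brute-force scan whose minimizer has rank at most that average. Your small deviations are harmless refinements: you pick the nonempty half with the smaller average instead of comparing against $\mathrm{rank}_{ij}(p_{ij})/2$, and you treat the rank-$1$ case explicitly.
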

\begin{proof}
Let $r_{ij}=\RANK_{ij}(p_{ij})$.
Observe that $\SIZE_{ij}(p_{ij},[d])=r_{ij}$, and 
$\SUMRANK_{ij}(p_{ij},[d]) = r_{ij}(r_{ij}-1)/2$, and so
$\AVGRANK_{ij}(p_{ij},[d])\le r_{ij}/2$, assuming $r_{ij}\ge 1$.

For each $i,j\in[n]$, we maintain a dyadic interval $I_{ij}$ such that
$\SIZE_{ij}(p_{ij},I_{ij})\ge 1$ and 
$\AVGRANK_{ij}(p_{ij},I_{ij})\le r_{ij}/2$.  At the beginning, we
can set $I_{ij}=[d]$ for all $i,j$.

In each round, we divide the dyadic interval $I_{ij}$ into its left half $I^-_{ij}$ and  its right half $I^+_{ij}$.
Since $\SIZE_{ij}(p_{ij},I_{ij})=\SIZE_{ij}(p_{ij},I_{ij}^-) + \SIZE_{ij}(p_{ij},I_{ij}^+)$
and $\SUMRANK_{ij}(p_{ij},I_{ij})=\SUMRANK_{ij}(p_{ij},I_{ij}^-) + \SUMRANK_{ij}(p_{ij},I_{ij}^+)$), we must have
$\SIZE_{ij}(p_{ij},I^-_{ij})>0$ and $\AVGRANK_{ij}(p_{ij},I^-_{ij}) \leq r_{ij}/2$, or 
$\SIZE_{ij}(p_{ij},I^+_{ij})>0$ and $\AVGRANK_{ij}(p_{ij},I^+_{ij})\leq r_{ij}/2$.
We can tell which is the case by invoking an algorithm for Problem~\ref{prob:count}.
We then reset $I_{ij}$ to $I^-_{ij}$ in the former case, or $I^+_{ij}$ in the latter.

The total number of rounds is $O(\log d)$.
When the length of $I_{ij}$ drops below $d^{1-\delta}$,
we can return the index $p'_{ij}\in I_{ij}$ minimizing $A[i,p'_{ij}]+B[p'_{ij},j]$,
which satisfies $\RANK_{ij}(p'_{ij})\le \AVGRANK_{ij}(p_{ij},I_{ij})\le r_{ij}/2$.  Each such index can be found in $O(d^{1-\delta})$ time by a brute-force search.  The total time is
$O(d^{1-\delta}n^2)$.
\end{proof}

\subsection{Reducing the Counting Problem to Selected-Entries Matrix Multiplication}

We now describe how to solve our counting problem by using Lopsided \#AE Sparse Triangle.
This part will be similar to previous reductions by Chan, Vassilevska Williams,
and Xu~\cite{ChanWX22}.
 
\begin{lemma}\label{lem:count}
Problem~\ref{prob:count} can be solved in $\Tcount(n,d,\delta)=\OO(d^{1+\delta}\Mselect{n,d^{4+2\delta},n}{n^2/d^{1+\delta}} + d^{1-\delta}n^2)$ time.
\end{lemma}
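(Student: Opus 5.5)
The plan is to adapt the Fredman's-trick reduction of Chan, Vassilevska Williams, and Xu~\cite{ChanWX22}. Each comparison $A[i,k]+B[k,j]<A[i,p]+B[p,j]$ becomes a comparison between a row-only quantity and a column-only quantity, $A[i,k]-A[i,p]<B[p,j]-B[k,j]$. Both counts then become sums of products of indicators that factor into a left part (depending on $i$) and a right part (depending on $j$).

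\emph{Grouping.} I would first group the pairs $(i,j)$ by their label $(p_{ij},I_{ij})$. There are $d$ choices for $p$. Since $I_{ij}$ is dyadic with length at least $d^{1-\delta}$, there are $O(d^\delta)$ choices for $I$. So there are $O(d^{1+\delta})$ groups, and in each one we want selected entries $(i,j)$ with $p_{ij}=p$ and $I_{ij}=I$. If some group is large, I would split it into chunks of $n^2/d^{1+\delta}$ entries. The total number of calls remains $O(d^{1+\delta})$ (up to $\log$ factors), each with $w=n^2/d^{1+\delta}$.

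\emph{Blocking.} For an ordered pair of indices $(k,q)$, I would sort the $2n$ values $\{A[i,k]-A[i,q]\}_i\cup\{B[q,j]-B[k,j]\}_j$ and cut the sorted list into $g=d^{1+\delta}$ blocks of $O(n/g)$ consecutive elements. Let $\beta_{kq}(i)$ and $\beta'_{kq}(j)$ denote the block indices. Then
\[
[A[i,k]+B[k,j]<A[i,q]+B[q,j]] \;=\; [\beta_{kq}(i)<\beta'_{kq}(j)] + [\beta_{kq}(i)=\beta'_{kq}(j)]\cdot[\text{true comparison}],
\]
and the first term is $\sum_{u\in[g]}[\beta_{kq}(i)=u]\,[\beta'_{kq}(j)>u]$, an inner product of dimension $g$. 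The total number of same-block incidences $(i,j,k,q)$ is $d^2\cdot n\cdot O(n/g)=O(d^{1-\delta}n^2)$, and they can be enumerated directly from the sorted lists in that time.

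\emph{Forming the matrix products.} For $\SIZE$, within a group $(p,I)$ the block part is $\sum_{k\in I}\sum_u [\beta_{kp}(i)=u][\beta'_{kp}(j)>u]$. This is a selected-entries product with inner dimension $dg$, which is at most $d^{4+2\delta}$. For $\SUMRANK$, I would expand the product of the two indicators, for $(k',k)$ and for $(k,p)$, into its four block/same-block terms.
\begin{itemize}
\item The block$\times$block term is $\sum_{k\in I,\,k'\in[d],\,u,v\in[g]}[\beta_{k'k}(i)=u][\beta_{kp}(i)=v]\cdot[\beta'_{k'k}(j)\ge u][\beta'_{kp}(j)>v]$. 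This has inner dimension $d^2g^2=d^{4+2\delta}$, which gives the claimed $\Mselect{n,d^{4+2\delta},n}{n^2/d^{1+\delta}}$ per group.
\item The terms where $(k',k)$ is same-block are handled by enumerating those $O(d^{1-\delta}n^2)$ incidences. For each one, we check $k\in I_{ij}$ and evaluate the $(k,p_{ij})$ comparison exactly in $O(1)$ time.
\end{itemize}

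The main obstacle is the remaining terms, where $(k,p)$ is same-block. Each such incidence $(i,j,k)$ needs the full global rank $\RANK_{ij}(k)=\sum_{k'}[\cdots]$, and computing it naively costs $d$ per incidence, which is too expensive. My first attempt would be to compute these ranks in batch by another application of the same blocking trick. For fixed $k$, the quantity $\sum_{k'}[\beta_{k'k}(i)\le\beta'_{k'k}(j)\text{-type}]$ is a selected-entries product of inner dimension $dg$, with selected entries exactly these incidences; the same-block corrections are again enumerated within the $O(d^{1-\delta}n^2)$ budget. Since there are only $O(d^{1-\delta}n^2)$ incidences in total, spread over $d$ values of $k$, I expect these calls to be dominated by the main term. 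Tie-breaking (the $\le$ versus $<$) is handled by the non-degeneracy assumption.
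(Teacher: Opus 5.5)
The gap is in the mixed term you flag as the main obstacle, and your proposed fix does not stay within the stated bound. Your grouping into $O(d^{1+\delta})$ batches is fine. So is the block$\times$block term (inner dimension $d^2g^2=d^{4+2\delta}$), and so is the brute-force treatment of the terms where $(k',k)$ is same-block.

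The mixed term is the one where $(k,p_{ij})$ is same-block but $(k',k)$ is cross-block. Your per-$k$ products have inner dimension $dg=d^{2+\delta}$, but their selected entries are the $(i,j,k)$ incidences themselves. Your own count bounds these only by $O(d^{1-\delta}n^2)$ in total, and by $O(n^2/d^{\delta})$ for a single $k$. The main term has only $d^{1+\delta}\cdot n^2/d^{1+\delta}=n^2$ selected entries, so the extra calls carry a factor $d^{1-\delta}$ more.

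Selected entries are not free. With the known bounds of the form $\OO(n^2/D^\gamma+wD^q)$, the $wD^q$ part of your extra calls is about $d^{1-\delta+(2+\delta)q}n^2$. For $q=0.10$ and $\delta\approx 0.023$ this exceeds even the trivial $O(dn^2)$, so APSP would no longer be subcubic. Treating $\Mselect{\cdot}{\cdot}$ purely as a black box does not rescue you either. Chunking into batches of $n^2/d^{1+\delta}$ entries gives up to $d^2$ calls rather than $O(d^{1+\delta})$, so the extra work is not absorbed into $d^{1+\delta}\Mselect{n,d^{4+2\delta},n}{n^2/d^{1+\delta}}+d^{1-\delta}n^2$. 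The expectation that these calls are dominated by the main term is therefore unfounded.

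The missing idea is to threshold jointly on both comparisons, rather than blocking each comparison separately on a fixed one-level grid. The paper uses global ranks $\alpha,\beta$ from one sorted list and middle vertices $(k,k',I_1,I_2)$ over all dyadic levels. Here $I_1$ is the unique dyadic interval with $\alpha_{i,k',k}$ in its left half and $\beta_{k,k',j}+1$ in its right half, and $I_2$ is defined similarly for $\alpha_{i,k,p}<\beta_{p,k,j}$. It then splits by the degree of the middle vertex, with $\Delta=n/d^{2+2\delta}$:
\begin{itemize}
\item Heavy vertices number $\OO(d^2n/\Delta)=\OO(d^{4+2\delta})$ and go into the selected-entries product.
\item Light vertices are enumerated pair by pair at cost at most $\Delta$ per edge. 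This is $\OO(n^2/d^{2\delta})$ per batch and $\OO(d^{1-\delta}n^2)$ overall.
\end{itemize}
Every enumerated triangle certifies both comparisons at once, so no mixed term arises.

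If you want to keep your one-level grids, you can use a finer grid of $g_2=d^{2+\delta}$ blocks for the $(k,p)$ comparisons. Then the mixed term has only $O(n^2/d^{\delta})$ incidences and can be brute-forced in $O(d)$ each. The price is that the block$\times$block product has inner dimension $d^2g_1g_2=d^{5+2\delta}$, which proves a weaker statement than the lemma.

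A minor point: you use $[\beta'_{k'k}(j)\ge u]$ in the product term, which is inconsistent with your decomposition $[\beta<\beta']+[\beta=\beta']\cdot[\text{true comparison}]$. With $\ge$, the same-block correction must be subtracted rather than added.
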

\begin{proof}
Fix an index $p\in [d]$ and a dyadic interval $I$ of length at least $d^{1-\delta}$.
Let $W$ be a batch of up to $n^2/d^{1+\delta}$ pairs from
$\{(i,j)\in [n]^2: p_{ij}=p,\ I_{ij}=I\}$.

\newcommand{\aalpha}{\widetilde{\alpha}}
\newcommand{\bbeta}{\widetilde{\beta}}
\newcommand{\ccc}{\widetilde{c}}

We concentrate on computing $c_{ij} := \SUMRANK_{ij}(p_{ij},I_{ij})$ for all $(i,j)\in W$.
Computing $\SIZE_{ij}(p_{ij},I_{ij})$ is similar (and easier).

Sort $L=
\{A[i,k]-A[i,k'] : i\in [n],\ k,k'\in [d]\}\cup
\{B[k,j]-B[k',j]: j\in [n],\ k,k' \in[d]\}$, which has $O(d^2n)$ elements,
in $O(d^2n\log n)$ time.
Let $\alpha_{i,k,k'}$ denote the rank of $A[i,k]-A[i,k']$ and 
$\beta_{k,k',j}$ denote the rank of $B[k,j]-B[k',j]$ in the sorted list $L$.
Note that 
\[ c_{ij}\ =\ |\{(k,k')\in I\times [d]: 
\alpha_{i,k',k}\le \beta_{k,k',j}\ \mbox{and}\ \alpha_{i,k,p} < \beta_{p,k,j}
\}|
\]
by the definition of $\SUMRANK_{ij}(p_{ij},I_{ij})$ and Fredman's trick \cite{Fredman76,ChanWX22,ChanWX23} (i.e., the obvious equivalence
of $a+b\le a'+b'$ with $a-a'\le b'-b$).

Define a tripartite graph $G$:
\begin{itemize}
\item The left vertex set is $[n]$, the right vertex set is $[n]$,
and the middle vertex set consists of tuples $(k,k',I_1,I_2)$ where $k\in I$, $k'\in [d]$, and 
$I_1$ and $I_2$ are dyadic intervals in $[O(d^2n)]$ (we do not list all middle vertices exhaustively but generate only those incident to edges).
\item There is an edge between the left vertex $i\in [n]$
and the middle vertex $(k,k',I_1,I_2)$ iff $\alpha_{i,k',k}$ is
in the left half of $I_1$ and $\alpha_{i,k,p}$ is in the left half of $I_2$.
\item There is an edge between the right vertex $j\in [n]$
and the middle vertex $(k,k',I_1,I_2)$ iff $\beta_{k,k',j}+1$ is
in the right half of $I_1$ and $\beta_{p,k,j}$ is in the right half of $I_2$.
\item There is an edge between the left vertex $i\in [n]$ and the
right vertex $j\in [n]$ iff $(i,j)\in W$.
\end{itemize}
Since any point belongs to at most $O(\log n0$ dyadic intervals,
there are a total of $O(d^2n\log^2 n)$ edges between the left part and
the middle part, and between the right part and the middle part.
Let $V_H$ denote the middle vertices of degree more than $\Delta := n/d^{2+2\delta}$
is $\OO(d^2n/\Delta)=\OO(d^{4+2\delta})$.
Let $c^H_{ij}$ be the number of triangles containing $i$ and $j$
through vertices in $V_H$.
Let $c^L_{ij}$ be the number of triangles between $i$ and $j$
not through vertices in $V_H$.
We compute $c^H_{ij}$ for all $(i,j)\in W$ in $O(\Mselect{n,d^{4+2\delta},n}{n^2/d^{1+\delta}})$
time.  We compute $c^L_{ij}$ for all $(i,j)\in W$ by brute force
in $O((d^2n\log^2 (dn))\cdot \Delta) = \OO(n^2/d^{2\delta})$ time.
Then $c_{ij}=c^L_{ij}+c^H_{ij}$.

We repeat the above for all $d$ choices of $p$ and $O(d^\delta)$ choices of $I$, and
a total of $O(d^{1+\delta})$ batches for~$W$.
The total number of iterations is $O(d^{1+\delta})$.
%
%
%
%
%
\end{proof}

Combining all the preceding lemmas, we conclude:
\begin{theorem}
For any constant $\delta>0$,
Real APSP can be solved in $\OO(d^{\delta}n \Mselect{n,d^{4+2\delta},n}{n^2/d^{1+\delta}} + n^3/d^\delta)$ deterministic time.
\end{theorem}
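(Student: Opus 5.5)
The theorem follows by chaining the three lemmas just proved, together with the standard reduction of Real APSP to rectangular $(\min,+)$-products.

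\emph{Step 1 (outer reduction).} By repeated squaring, Real APSP reduces to $O(\log n)$ $(\min,+)$-products of $n\times n$ real matrices. Each such product splits into $n/d$ instances of Problem~\ref{prob:orig}: cut the inner dimension into blocks of size $d$ and take entrywise minima of the $n/d$ partial products. The minima cost $O(n^3/d)$ in total, which is absorbed by the final bound. So it suffices to solve Problem~\ref{prob:orig} in time $\OO(d^{1+\delta}\Mselect{n,d^{4+2\delta},n}{n^2/d^{1+\delta}} + d^{1-\delta}n^2)$. Multiplying by $n/d$ then gives $\OO(d^{\delta}n\,\Mselect{n,d^{4+2\delta},n}{n^2/d^{1+\delta}} + n^3/d^{\delta})$.

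\emph{Step 2 (rectangular instance).} Lemma~\ref{lem:orig:to:half} solves Problem~\ref{prob:orig} with $O(\log d)$ calls to Problem~\ref{prob:half}. The initial indices $p_{ij}$ can be arbitrary, e.g.\ $p_{ij}=0$, whose rank is at most $d$. After $\lceil\lg d\rceil+1$ halvings the rank is $1$, since $\lceil 1/2\rceil=1$ keeps it there. Lemma~\ref{lem:half:to:count} then gives $\Thalf(n,d)=O(\Tcount(n,d,\delta)\log d + d^{1-\delta}n^2)$. Lemma~\ref{lem:count} bounds $\Tcount$ by $\OO(d^{1+\delta}\Mselect{n,d^{4+2\delta},n}{n^2/d^{1+\delta}} + d^{1-\delta}n^2)$. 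Composing these, the $O(\log^2 d)$ overhead disappears into the $\OO$.

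\emph{Step 3 (checking the pieces fit).} I would verify three things.
\begin{itemize}
\item Each call to Problem~\ref{prob:count} made inside Lemma~\ref{lem:half:to:count} uses dyadic intervals of length at least $d^{1-\delta}$. This holds because the binary search stops once the length drops below $d^{1-\delta}$.
\item The hypothesis $d\le n$ of Problem~\ref{prob:half} holds. This is by the choice of $d$, as a small polynomial of $n$.
\item In Lemma~\ref{lem:count}, grouping the pairs $(i,j)$ by the current $(p_{ij},I_{ij})$ costs at most $d\cdot O(d^{\delta})$ batches beyond the $n^2/d^{1+\delta}$-size ones. This gives the stated $O(d^{1+\delta})$ invocations, and the leftover batches are absorbed by padding.
\end{itemize}
The $d^{1-\delta}n^2$ brute-force terms also stay within budget: per rectangular instance they sum to $\OO(d^{1-\delta}n^2)$, and times $n/d$ this is $\OO(n^3/d^{\delta})$.

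The only step needing care, and the one I expect to be the main obstacle, is this bookkeeping of batch counts in Step 3. Everything else is direct composition.
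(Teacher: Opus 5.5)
Your proposal is correct and matches the paper's argument: the paper proves the theorem simply by composing the split into $n/d$ rectangular instances with Lemmas~\ref{lem:orig:to:half} and~\ref{lem:half:to:count} and the counting lemma of Section~\ref{sec:real}, giving $(n/d)\cdot\OO\bigl(d^{1+\delta}\Mselect{n,d^{4+2\delta},n}{n^2/d^{1+\delta}}+d^{1-\delta}n^2\bigr)$, and your batch count (at most $d^{1+\delta}$ full batches plus one partial batch per group $(p,I)$) is the right bookkeeping. One small nit: the halves queried in the binary search can have length only $d^{1-\delta}/2$, not $d^{1-\delta}$. This is harmless, since you can stop splitting once the length is below $2d^{1-\delta}$, and in any case it changes the number of admissible dyadic intervals only by a constant factor.
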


For example, using the fact~\cite[Table 2 in p.~40]{AlmanV26} that
$\Mselect{n,d,n}{w} \le 
\OO(n^2/d^{0.0112} + wd^{0.10})$ for $d\le n^{0.056}$,
the above bound becomes
$\OO(d^\delta n\cdot (n^2/d^{(4+2\delta)0.0112} + (n^2/d^{1+\delta})d^{(4+2\delta)0.10})
+ n^3/d^\delta)$ for $d\le n^{0.056/(4+2\delta)}$.
Setting $\delta={0.0226537}$ and $d=n^{0.013843}$
yields the following (we did not try to optimize the exponent):

\begin{corollary}
Real APSP can be solved in $O(n^{2.9997})$ deterministic time.
\end{corollary}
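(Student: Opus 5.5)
The plan is to plug numbers into the preceding theorem: Real APSP in $\OO(d^{\delta}n\,\Mselect{n,d^{4+2\delta},n}{n^2/d^{1+\delta}} + n^3/d^\delta)$ time, combined with the bound from \cite[Table 2]{AlmanV26}, $\Mselect{n,D,n}{w}\le \OO(n^2/D^{0.0112}+wD^{0.10})$, valid for $D\le n^{0.056}$. All that remains is choosing $\delta$ and $d=n^x$ so that every term is $O(n^{3-\eps})$ with $3-\eps\le 2.9997$, absorbing polylogarithmic factors into the slack. I would first check that the theorem's reductions (Lemmas~\ref{lem:orig:to:half}, \ref{lem:half:to:count}, \ref{lem:count} and the $n/d$ rectangular blocks from repeated squaring) are all deterministic, which they are by construction.

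With $D=d^{4+2\delta}$ and $d=n^x$, the bound expands into three terms.
\begin{itemize}
\item Term $d^\delta n\cdot n^2/D^{0.0112}$ has exponent $3+x\delta-0.0112(4+2\delta)x$.
\item Term $d^\delta n\cdot (n^2/d^{1+\delta})D^{0.10}$ has exponent $3+x(\delta-1-\delta+0.4+0.2\delta)=3-x(0.6-0.2\delta)$.
\item Term $n^3/d^\delta$ has exponent $3-x\delta$.
\end{itemize}
The second term is harmless. For the others, I balance $\delta=0.0112(4+2\delta)-\delta$, which gives $\delta(2-0.0224)=0.0448$, so $\delta\approx0.02265$, matching the stated $0.0226537$. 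The savings exponent is then $x\delta$, and I take $x$ as large as the constraint $D\le n^{0.056}$ permits, i.e.\ $x(4+2\delta)\le 0.056$, giving $x\approx0.056/4.0453\approx 0.013843$. The savings are then $x\delta\approx 0.013843\cdot0.02265\approx 3.1\times10^{-4}$, so the time is $\OO(n^{2.99969})$, which is $O(n^{2.9997})$ after absorbing logs.

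The main obstacle is verifying that the hypotheses of the earlier lemmas hold in this regime. In particular, I need $d\le n$ (clear), the subcubic assumption on the $d^2n$-size sorting (giving $\OO(n\cdot n/d\cdot d^2 n\cdot\ldots)$-type costs), and the brute-force $\OO(n^2/d^{2\delta})$ per batch inside Lemma~\ref{lem:count}. I would have to recheck that these, multiplied by the $d^{1+\delta}$ iterations, the $\log d$ factors, and the $n/d$ blocks, remain dominated by $n^3/d^\delta$. Each gives exponent $3-x\delta$ or better, so the claimed bound follows.
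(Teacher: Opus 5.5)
Your proposal is correct and follows the paper's argument exactly: plug the Alman--Vassilevska Williams bound into the preceding theorem, balance the first and third terms to get $\delta=0.0448/1.9776\approx 0.0226537$, and take $d=n^{0.056/(4+2\delta)}\approx n^{0.013843}$, which gives savings $\approx 3.1\times 10^{-4}$ and hence $\OO(n^{2.99969})$. The one blemish is your side remark on sorting costs, which is garbled (the expression $\OO(n\cdot n/d\cdot d^2n\cdots)$ as written is not even subcubic). The actual sorting and graph-construction overhead totals only $\OO(d^{2+\delta}n^2)$ over all $n/d$ rectangular products and is already absorbed in the theorem's bound, so no separate check is needed.
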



\section{Real 3SUM Derandomized}\label{sec:real:3sum}

\newcommand{\length}{\mbox{\rm diff-rank}}
\newcommand{\rank}{\RANK}
\newcommand{\SUMP}{\mbox{\rm sum-}\Phi}
\newcommand{\AVGP}{\mbox{\rm avg-}\Phi}
\newcommand{\MIDRANK}{\mbox{\rm mid-rank}}
\newcommand{\SUM}{\mbox{\rm sum-rank-power}}

In this section, we present a truly subquadratic-time deterministic algorithm for Real 3SUM\@. We use ideas similar to those in the previous section, in combination with
Chan, Vassilevska Williams, and Xu's previous randomized reduction~\cite{ChanWX22},
though the details become a bit more involved for 3SUM.

Given sets $A,B,C$ of $O(n)$ elements, we solve the following slight extension of the
3SUM problem: for each $c\in C$, find the predecessor and successor\footnote{
Here, the predecessor of $c$ in a set $S$ is the largest element in $S$ strictly less than than $c$, and the successor is the smallest element in $S$ greater than or equal to $c$.
} of $c$ in 
the sumset $A+B$.  
To this end, we first 
sort $A$ and $B$ and divide the sorted lists of $A$ and $B$ into sublists $A_1,\ldots,A_{n/d}$ and $B_1,\ldots,B_{n/d}$ of size $d$, for a given parameter $d\le n$.
Let $A[i,k]$ denote the $k$-th element of $A_i$ for each $i\in [n/d]$ and $k\in [d]$, and $B[j,\ell]$ denote the $\ell$-th element of $B_j$ for each $j\in [n/d]$ and $\ell\in [d]$.  
By ideas from known slightly subquadratic real 3SUM algorithms (originally due
to Gr\o nlund and Pettie~\cite{GronlundP18}, and continued by Chan~\cite{Chan20}), 
Chan, Vassilevska Williams, and Xu~\cite{ChanWX22} observed that
the original problem
reduces to the following problem
(roughly,  because for each $c\in C$, it suffices to search for $c$ in $A_i+B_j$ for $O(n/d)$ pairs $(i,j)$ that form a monotone ``staircase'' in the $[n/d]\times [n/d]$ grid):

\newcommand{\ThreeSUMOne}{\mbox{\sf Real-3SUM Variant}} %
\begin{problem}[\ThreeSUMOne]\label{prob:3sum1}
We are given an $(n/d)\times d$ real matrix $A$ and an
$(n/d)\times d$ real matrix $B$.
For each $i,j\in [n/d]$, we are also given a set $C_{ij}$ of real
numbers with $\sum_{i,j}|C_{ij}|=O( n^2/d)$.

For each $i,j\in [n/d]$ and each $c\in C_{ij}$, we want to find the predecessor and successor of the value $c$ among the elements in
$\{A[i,k]+B[j,\ell]: k,\ell\in [d]\}$.
\end{problem}

For simplicity of presentation, we again assume that there are no degeneracies (no two pairs have equal sum).

\subsection{A New Counting Problem}

Similar to the halving problem for Real APSP, we define an approximate halving problem for Real 3SUM, 
which shrinks the search space by some constant factor:

\begin{problem}[\textsf{Approximate Halving}]\label{prob:shrink}
We are given an $(n/d) \times d$ real matrix $A$ and an $(n/d)\times d$
real matrix $B$. For each $i, j\in [n/d]$, we are also given a set $Q_{ij}$
of quintuples in $\R\times [d]^4$ with $\sum_{i,j}|Q_{ij}|=O( n^2/d)$.

For each $i, j \in [n/d]$ and each $(c,p^-,q^-,p^+,q^+)\in Q_{ij}$ 
satisfying $A[i,p^-]+B[j,q^-] < c \le A[i,p^+]+B[j,q^+]$,
we want to find a new quadruple
$(p'^-,q'^-,p'^+,q'^+)\in[d]^4$
satisfying $A[i,p'^-]+B[j,q'^+] < c \le  A[i,p'^+]+B[j,q'^+]$,
such that 
\[
\length_{ij}(p'^-_,q'^-,p'^+,q'^+)\le \max\{0.8\cdot \length_{ij}(p^-,q^-,p^+,q^+),\,1\},
\]
where $\length_{ij}(p^-,q^-,p^+,q^+)=\rank_{ij}(p^+,q^+)-\rank_{ij}(p^-,q^-)$
and $\rank_{ij}(p,q)$ is the rank of $A[i,p]+B[j,q]$ 
in $\{A[i,k]+B[j,\ell]: j,\ell\in [d]\}$.
\end{problem}

By iterating $O(\log d)$ times, we can immediately reduce Problem~\ref{prob:3sum1} to the approximate halving problem.

\begin{lemma}
    If Problem~\ref{prob:shrink} can be solved in $\Thalf(n,d)$ time,
    then Problem~\ref{prob:3sum1} can be solved in $O(\Thalf(n,d)\log d)$ time.
\end{lemma}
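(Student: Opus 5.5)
We reduce Problem~\ref{prob:3sum1} to $O(\log d)$ calls of the Approximate Halving procedure, by maintaining for every query $c\in C_{ij}$ a bracketing pair of index pairs and shrinking the rank gap between them geometrically.

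\textbf{Initialization.} For each $i,j$ and each $c\in C_{ij}$, we first handle the boundary cases. Let $m_{ij}=A[i,0]+B[j,0]$ and $M_{ij}=A[i,d-1]+B[j,d-1]$; since the rows are sorted, these are the minimum and maximum of the block sum set.
\begin{itemize}
\item If $c\le m_{ij}$, the successor of $c$ is $m_{ij}$ and there is no predecessor.
\item If $c>M_{ij}$, the predecessor of $c$ is $M_{ij}$ and there is no successor.
\end{itemize}
Both checks take $O(1)$ time per query. Otherwise $m_{ij}<c\le M_{ij}$, and we form the quintuple $(c,0,0,d-1,d-1)$. It satisfies the required bracketing condition $A[i,p^-]+B[j,q^-]<c\le A[i,p^+]+B[j,q^+]$, and its $\length$ is at most $d^2$. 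The set $Q_{ij}$ is the collection of these quintuples, so $\sum_{i,j}|Q_{ij}|\le\sum_{i,j}|C_{ij}|=O(n^2/d)$, as Problem~\ref{prob:shrink} requires.

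\textbf{Iteration.} We call the Approximate Halving algorithm on the current $Q_{ij}$ and replace each quintuple by the returned quadruple. (We read the typo $q'^+$ in the lower inequality of Problem~\ref{prob:shrink} as $q'^-$.) The new quadruple still brackets $c$, so the invariant is preserved. The $\length$ of each quintuple drops to at most $\max\{0.8\,\ell,1\}$, where $\ell$ is its current value. Starting from $\ell\le d^2$, after $\lceil\log_{1/0.8} d^2\rceil=O(\log d)$ rounds every quintuple has $\length\le1$. Quintuples that have already reached $\length 1$ stay valid, because the guarantee $\max\{\cdot,1\}$ allows them to remain at $1$; alternatively, we can simply stop passing them on. The sizes $|Q_{ij}|$ never increase.

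\textbf{Extraction.} Once $\length=1$, the two sums $A[i,p^-]+B[j,q^-]$ and $A[i,p^+]+B[j,q^+]$ have consecutive ranks in $\{A[i,k]+B[j,\ell]\}$, since we assume no degeneracies. Together with the bracketing condition, this means they are exactly the predecessor and successor of $c$. We output them in $O(1)$ time per query.

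\textbf{Cost.} The total running time is $O(\Thalf(n,d)\log d)$ for the halving calls, plus $O(n^2/d)$ for initialization and extraction. Any algorithm for Problem~\ref{prob:shrink} must read its input of size $\Omega(n^2/d)$, so $\Thalf(n,d)=\Omega(n^2/d)$ and the additive term is absorbed. This gives the bound $O(\Thalf(n,d)\log d)$ claimed in the lemma.
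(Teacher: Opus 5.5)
Your proposal is correct and is the iteration the paper intends: the paper's proof is only the remark that $O(\log d)$ rounds of approximate halving suffice. You fill in the details, namely the initial bracketing quintuple of diff-rank at most $d^2$, the geometric-decay round count, and the extraction of the predecessor and successor once the diff-rank reaches $1$. One small point is that Problem~\ref{prob:3sum1} as stated does not assume sorted rows; they are sorted in the paper's context, and otherwise the row minima and maxima can be found in $O(n)$ total time, so nothing changes.
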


To solve Problem~\ref{prob:shrink}, we want to find a  good ``pivot'' $(k,\ell)$ 
for which both $\length_{ij}(p'^-_,q'^-,k,\ell)$ and 
$\length_{ij}(k,\ell,p'^+,q'^+)$ are small, i.e., its rank is close to
\[ \MIDRANK_{ij}(p^-,q^-,p^+,q^+)\ :=\ (\RANK_{ij}(p^-,q^-)+\RANK_{ij}(p^+,q^+))/2,
\]
i.e., the following ``potential'' value is small:
\[ \Phi_{ij}(k,\ell,p^-,q^-,p^+,q^+)\ :=\ (\RANK_{ij}(k,\ell) - \MIDRANK_{ij}(p^-,q^-,p^+,q^+))^2.
\]
Inspired by the earlier average rank idea, we will maintain an interval with low average potential value instead, which motivates the following new version of the counting problem:

\begin{problem}[\textsf{Counting}]\label{prob:product}
      We are given an $(n/d) \times d$ real matrix $A$ and an $(n/d)\times d$
real matrix $B$. 
For each $i, j\in [n/d]$, we are also given a set $Q_{ij}$ of quintuples, where each quintuple
is of the form $(p^-,q^-,p^+,q^+,I)$ for some $p^-,q^-,p^+,q^+\in [d]$ and 
$I$ is a dyadic interval in $[d]$ of length at least $d^{1-\delta}$ for a given parameter $\delta>0$, with
$\sum_{i,j}|Q_{ij}|=O(n^2/d)$.  
Let
\[ L_{ij}(p^-,q^-,p^+,q^+,I) := \{(k,\ell)\in I\times [d]: A[i,p^-]+B[j,q^-] < A[i,k]+B[j,\ell]
< A[i,p^+]+B[j,q^+]\}.
\]

For each $i, j \in [n/d]$ and each $(p^-,q^-,p^+,q^+,I)\in Q_{ij}$,
we want to compute:
\begin{eqnarray*}
\SIZE_{ij}(p^-,q^-,p^+,q^+,I) &:=& |L_{ij}(p^-,q^-,p^+,q^+,I)|\\
\SUMP_{ij}(p^-,q^-,p^+,q^+,I) &:=& \sum_{(k,\ell)\in L_{ij}(p^-,q^-,p^+,q^+,I)}
\Phi_{ij}(k,\ell,p^-,q^-,p^+,q^+).
\end{eqnarray*}
\end{problem}


From these values, we can then get:
\[ \AVGP_{ij}(p^-,q^-,p^+,q^+,I)\ :=\ \SUMP_{ij}(p^-,q^-,p^+,q^+,I)/\SIZE_{ij}(p^-,q^-,p^+,q^+,I).
\]

\begin{lemma}
For any constant $\delta>0$, if Problem~\ref{prob:product} can be solved in $\Tcount(n,d,\delta)$ time,
    then Problem~\ref{prob:shrink} can be solved in $\Thalf(n,d)=O(\Tcount(n,d,\delta)\log d+n^2/d^\delta)$ time.    
\end{lemma}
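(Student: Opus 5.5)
We follow the proof of Lemma~\ref{lem:half:to:count}, using a binary search over dyadic intervals in the first coordinate $k$, but we track the average potential $\AVGP$ instead of the average rank. Fix $i,j$ and a query $(c,p^-,q^-,p^+,q^+)\in Q_{ij}$ with $A[i,p^-]+B[j,q^-]<c\le A[i,p^+]+B[j,q^+]$. Write $r^-=\RANK_{ij}(p^-,q^-)$, $r^+=\RANK_{ij}(p^+,q^+)$, $D=r^+-r^-$ and $m=(r^-+r^+)/2$.

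If $D\le 1$ we leave the quadruple unchanged. Otherwise the set $L_{ij}(p^-,q^-,p^+,q^+,[d])$ is exactly the set of pairs whose ranks lie strictly between $r^-$ and $r^+$. It therefore has size $D-1\ge 1$. Its average potential is the average of $(x-m)^2$ over the integers $x$ with $r^-<x<r^+$, which is about $D^2/12$.

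We start with $I=[d]$ and maintain the invariant $\SIZE>0$ and $\AVGP\le D^2/12$. In each round we split $I$ into halves $I^-$ and $I^+$. Both $\SIZE$ and $\SUMP$ are additive over the split, so one of the halves with positive size has average potential at most that of $I$. We find this half with one call to an algorithm for Problem~\ref{prob:product}, using the quintuple $(p^-,q^-,p^+,q^+,I^\pm)$. The instance sizes satisfy $\sum|Q_{ij}|=O(n^2/d)$, so the total cost over $O(\log d)$ rounds is $O(\Tcount(n,d,\delta)\log d)$.

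When $|I|<d^{1-\delta}$, we brute force inside the strip $I\times[d]$. For each $k\in I$, the sorted row $B_j$ gives, by binary search, the index $\ell$ that makes the rank of $A[i,k]+B[j,\ell]$ closest to $m$. Since ranks are global, we compare the values $A[i,k]+B[j,\ell]$ with the values at $(p^\pm,q^\pm)$ and with each other. Because both $A_i$ and $B_j$ are sorted, we can move a staircase along $k$ in $O(d^{1-\delta})$ time to find the pair $(k,\ell)\in L_{ij}(\cdot,I)$ minimizing $|{\rm value}-{\rm mid}|$. Over all $n^2/d$ queries the total cost is $O(n^2/d^\delta)$, up to log factors.

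The pair minimizing $\Phi$ over the final $L$ has $\Phi\le\AVGP\le D^2/12$. So its rank $x$ satisfies $|x-m|\le D/\sqrt{12}<0.29D$.

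The main obstacle is that we cannot compute $\Phi$ (that is, global ranks) directly in the brute-force step. To handle it, we would pick the pivot $(k,\ell)$ as the element of the final $L$ whose rank is closest to $m$. One way to identify it is to pick the minimizer of $\Phi$ via the counting oracle on singleton-like intervals. A cheaper alternative, which we would try first, is to avoid needing the exact minimizer. For each $k\in I$, $L\cap(\{k\}\times[d])$ is a contiguous range of $\ell$. We can compute each element's global rank $\rank_{ij}(k,\ell)$ in $O(d)$ time by a merge-like count, affordable only for $O(d^{1-\delta})$ candidates per query, or use Fredman-type counting. Our fallback is simply to take the candidate of median value among those in the range, accepting an $O(d)$ cost per query amortized. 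This step needs care to stay within $n^2/d^\delta$.

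Finally, given the pivot $(k,\ell)$, we compare $c$ with $A[i,k]+B[j,\ell]$. If $c\le A[i,k]+B[j,\ell]$ we set $(p'^+,q'^+)=(k,\ell)$, and otherwise $(p'^-,q'^-)=(k,\ell)$. The new diff-rank is $\max(x-r^-,r^+-x)\le D/2+0.29D<0.8D$, which is what Problem~\ref{prob:shrink} requires.
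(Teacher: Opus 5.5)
Your binary-search phase matches the paper's: the invariant $\SIZE\ge 1$ and $\AVGP\le D^2/12$, additivity over the two halves, and $O(\Tcount(n,d,\delta)\log d)$ total. The final step, however, has a genuine gap, which you flag yourself and do not close. You try to \emph{identify} a pivot in the final $L$ whose global rank is near $m$. Nothing within the budget of $\OO(d^{1-\delta})$ per quintuple lets you do that, for four reasons. First, $m$ is a rank, not a value, so ``minimizing $|\mathrm{value}-\mathrm{mid}|$'' has no computable target. Second, computing a global rank in $A_i+B_j$ (by a staircase walk), or selecting the element of rank about $m$, costs order $d$ per quintuple. That is order $n^2$ over all $O(n^2/d)$ quintuples, which is exactly the $O(d)$ cost you concede, and it destroys the $n^2/d^\delta$ term. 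Third, the counting oracle cannot be pushed down to ``singleton-like'' intervals, since Problem~\ref{prob:product} only accepts dyadic intervals of length at least $d^{1-\delta}$; even a single $k$ would still leave $d$ candidates $\ell$. Fourth, the median fallback is too slow and also too weak. If the median's rank were more than $D/\sqrt{6}$ from $m$, at least half of $L$ would have $\Phi>D^2/6$, forcing $\AVGP>D^2/12$. So the averaging bound only places the median within $D/\sqrt6$ of $m$, which gives a shrink factor of about $0.91$ rather than the required $0.8$.

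The missing idea is that you never need to know \emph{which} element is the pivot; you only need that it exists. Search for $c$ directly. For each $k\in I$, binary search for $c-A[i,k]$ in the sorted row $B_j$; a staircase walk would cost $O(|I|+d)$, which is too much. This gives the predecessor and successor of $c$ among the values $A[i,k]+B[j,\ell]$ over $(k,\ell)\in(I\times[d])\cup\{(p^-,q^-),(p^+,q^+)\}$, in $\OO(d^{1-\delta})$ time per quintuple and $\OO(n^2/d^\delta)$ overall; return these two pairs. Correctness follows from the unknown minimizer $(k^*,\ell^*)$ of $\Phi$ over the final $L$. It has rank $x$ with $|x-m|\le D/\sqrt{12}$, and it belongs to this candidate set. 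If its value is below $c$, the returned predecessor has rank at least $x$ and the successor has rank at most $r^+$. Otherwise the returned successor has rank at most $x$ and the predecessor has rank at least $r^-$. Either way the new diff-rank is at most $\max\{x-r^-,\,r^+-x\}\le D/2+D/\sqrt{12}<0.8D$. This is your final comparison with $c$, applied implicitly to every element of the strip at once, and it is exactly how the paper finishes the proof.
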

\begin{proof}
Consider a quintuple $(c,p^-,q^-,p^+,q^+)\in Q_{ij}$.
Let $a=\RANK_{ij}(p^-,q^-)$ and $b=\RANK_{ij}(p^+,q^+)$, and 
$\lambda=\length_{ij}(p^-,q^-,p^+,q^+)=b-a$.
Observe that $\SIZE_{ij}(p^-,q^-,p^+,q^+,[d]) = \lambda-1$,
and $\SUMP_{ij}(p^-,q^-,p^+,q^+,[d]) = \sum_{t=a+1}^{b-1} (t - \frac{a+b}{2})^2 =
\lambda(\lambda-1)(\lambda-2)/12$ by straightforward calculations,
and so $\AVGP_{ij}(p^-,q^-,p^+,q^+,[d]) \le \lambda^2/12$, assuming $\lambda>1$.

For the quintuple $(c,p^-,q^-,p^+,q^+)\in Q_{ij}$,
we maintain a dyadic interval $I$ such that
$\SIZE_{ij}(p^-,q^-,p^+,q^+,I)\ge 1$ and 
$\AVGP_{ij}(p^-,q^-,p^+,q^+,I) \le\lambda^2/12$.  At the beginning, we
can set $I=[d]$.

In each round, we divide the dyadic interval $I$ into its left half $I^-$ and  its right half $I^+$.  Similar to before,
$\SIZE_{ij}(p^-,q^-,p^+,q^+,I^-)\ge 1$ and $\AVGP_{ij}(p^-,q^-,p^+,q^+,I^-)\le \lambda^2/12$, or $\SIZE_{ij}(p^-,q^-,p^+,q^+,I^-)\ge 1$ and
 $\AVGP_{ij}(p^-,q^-,p^+,q^+,I^+)\le \lambda^2/12$.
We can tell which is the case by invoking an algorithm for Problem~\ref{prob:count}.
We then reset $I$ to $I^-$ in the former case, or $I^+$ in the latter.

The total number of rounds is $O(\log d)$.
When the length of $I$ drops below $d^{1-\delta}$ for a quintuple $(c,p^-,q^-,p^+,q^+)\in Q_{ij}$,
we find the predecessor $A[i,p'^-]+B[j,q'^-]$ and successor $A[i,p'^+]+B[j,q'^+]$ of $c$ in
 $\{A[i,k]+B[j,\ell]: (k,\ell)\in (I\times [d])\cup\{(p^-,q^-),(p^+,q^+)\}\}$, and return $(p'^-,q'^-,p'^+,q'^+)$
for~$c$.  This can be done by a binary search over $\ell$ for each $k\in I$,
in time $\OO(d^{1-\delta})$ per quintuple in $Q_{ij}$.
The total time over a round is $\OO(n^2/d \cdot d^{1-\delta})=\OO(n^2/d^\delta)$.
To show correctness, note that there exists $(k,\ell)\in L_{ij}(p^-,q^-,p^+,q^+,I)$
with $(\RANK_{ij}(k,\ell)- \frac{a+b}{2})^2\le \AVGP_{ij}(p^-,q^-,p^+,q^+,I) \le  \lambda^2/12$, implying that $\length_{ij}(p^-,q^-,k,\ell)$
and $\length_{ij}(k,\ell,p^+,q^+)$ are both at most $\lambda/2+\lambda/\sqrt{12}<0.8\lambda$.  It follows that $\length_{ij}(p'^-,q'^-,p'^+,q'^+)<0.8\lambda$.
\end{proof}

The above counting problem can be simplified:

\begin{problem}[\textsf{Counting, Simplified}]\label{prob:product2}
Let $s\in\{0,1,2\}$ and $\prec$ be either $\le$ or $<$.
      We are given an $(n/d) \times d$ real matrix $A$ and an $(n/d)\times d$
real matrix $B$. 
For each $i, j\in [n/d]$, we are also given a set $Q_{ij}$ of triples, where each triple
is of the form $(p,q,I)$ for some $p,q\in [d]$ and 
$I$ is a dyadic interval in $[d]$ of length at least $d^{1-\delta}$ for a given parameter $\delta>0$, with
$\sum_{i,j}|Q_{ij}|=O(n^2/d)$.  
Let
\[ L^{\prec}_{ij}(p,q,I) := \{(k,\ell)\in I\times [d]: A[i,k]+B[j,\ell]
\prec A[i,p]+B[j,q]\}.
\]

For each $i, j \in [n/d]$ and each $(p,q,I)\in Q_{ij}$,
we want to compute:
\begin{eqnarray*}
\SUM^{(s,\prec)}_{ij}(p,q,I) &:=& \sum_{(k,\ell)\in L_{ij}(p,q,I)}\RANK_{ij}(k,\ell)^s.
\end{eqnarray*}
\end{problem}

\begin{lemma}
If Problem~\ref{prob:product2} can be solved in $\Tcount'(n,d,\delta)$ time,
then Problem~\ref{prob:product} can be solved in $\Tcount(n,d,\delta)=O(\Tcount'(n,d,\delta))$ time.
\end{lemma}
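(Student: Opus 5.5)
The idea is to expand the squared potential and write both $\SIZE$ and $\SUMP$ as linear combinations of a constant number of $\SUM^{(s,\prec)}$ queries.

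Fix $i,j$ and a quintuple $(p^-,q^-,p^+,q^+,I)\in Q_{ij}$. Write $a=\RANK_{ij}(p^-,q^-)$, $b=\RANK_{ij}(p^+,q^+)$ and $m=(a+b)/2$. By the no-degeneracy assumption, a pair $(k,\ell)$ lies in $L_{ij}(p^-,q^-,p^+,q^+,I)$ if and only if it lies in $L^{<}_{ij}(p^+,q^+,I)$ but not in $L^{\le}_{ij}(p^-,q^-,I)$. Strict inequality on both sides is obtained by choosing $\prec$ to be $<$ for the upper endpoint and $\le$ for the lower endpoint. For any function $\phi$ of the rank this gives
\[ \sum_{(k,\ell)\in L_{ij}(p^-,q^-,p^+,q^+,I)} \phi(\RANK_{ij}(k,\ell)) = \sum_{L^{<}_{ij}(p^+,q^+,I)}\phi - \sum_{L^{\le}_{ij}(p^-,q^-,I)}\phi. \]
Taking $\phi\equiv 1$ shows that $\SIZE_{ij}$ equals $\SUM^{(0,<)}_{ij}(p^+,q^+,I)-\SUM^{(0,\le)}_{ij}(p^-,q^-,I)$.

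For the potential, expand
\[ \Phi_{ij}=(\RANK_{ij}(k,\ell)-m)^2=\RANK_{ij}(k,\ell)^2-2m\,\RANK_{ij}(k,\ell)+m^2. \]
Then $\SUMP_{ij}$ is the combination of the $s=2$, $s=1$ and $s=0$ power sums with coefficients $1$, $-2m$ and $m^2$, each taken as the difference of the $<$ query at $(p^+,q^+,I)$ and the $\le$ query at $(p^-,q^-,I)$.

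The coefficient $m$ requires the two ranks $a$ and $b$, which are not given directly. We read them off the same oracle with $I=[d]$. Because the $s=0$ query counts $(k,\ell)$ with $A[i,k]+B[j,\ell]<A[i,p]+B[j,q]$, we get $\RANK_{ij}(p,q)=\SUM^{(0,<)}_{ij}(p,q,[d])$, taking ranks as $0$-indexed; a $1$-indexed convention only shifts by a constant. The interval $[d]$ has length $d\ge d^{1-\delta}$ and is dyadic under the standing assumption that $d$ is a power of~$2$, so these are legal queries.

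The reduction therefore builds, for each combination $(s,\prec)$ actually needed (at most six), a query set for Problem~\ref{prob:product2}. For each input quintuple it contributes the triples $(p^+,q^+,I)$, $(p^-,q^-,I)$, $(p^+,q^+,[d])$ and $(p^-,q^-,[d])$. Each set $Q_{ij}$ grows by only a constant factor, so the condition $\sum_{i,j}|Q_{ij}|=O(n^2/d)$ is preserved. We make $O(1)$ calls, each costing $\Tcount'(n,d,\delta)$, and then combine the answers arithmetically in $O(1)$ time per quintuple. That arithmetic is $O(n^2/d)$ in total, which is dominated because any algorithm must read its queries. The overall bound is $\Tcount=O(\Tcount')$.

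The only place needing care is getting strictness and rank conventions right. The set $L_{ij}$ excludes both endpoints, which is handled by the choice of $\prec$ above. The paper's notation $L_{ij}(p,q,I)$ in the definition of $\SUM$ must be read as $L^{\prec}_{ij}(p,q,I)$. Beyond that, nothing is expected to be hard.
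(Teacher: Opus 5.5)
Your proposal is correct and is essentially the paper's proof: expand $(x-m)^2$, express each sum over $L_{ij}(p^-,q^-,p^+,q^+,I)$ as the $<$ query at $(p^+,q^+,I)$ minus the $\le$ query at $(p^-,q^-,I)$ for $s\in\{0,1,2\}$, and get the two endpoint ranks (hence $m$) from $s=0$ queries on $[d]$. The paper uses $\SUM^{(0,\le)}_{ij}(p,q,[d])$, i.e.\ $1$-indexed ranks, which agrees with your $\SUM^{(0,<)}$ up to the constant shift you mention; one small correction is that the subtraction needs $L^{\le}_{ij}(p^-,q^-,I)\subseteq L^{<}_{ij}(p^+,q^+,I)$, which follows from $A[i,p^-]+B[j,q^-]<A[i,p^+]+B[j,q^+]$ (true for every quintuple arising from Problem~\ref{prob:shrink}, and otherwise $L_{ij}$ is empty and this is detectable in $O(1)$ time), not from non-degeneracy.
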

\begin{proof}
Computing $\RANK_{ij}(p^-,q^-)$ and $\RANK_{ij}(p^+,q^+)$, 
and thus $\MIDRANK_{ij}(p^-,q^-,p^+,q^+)$, reduces to
computing $\SUM_{ij}^{(0,\le)}(p^-,q^-,[d])$ and $\SUM_{ij}^{(0,\le)}(p^+,q^+,[d])$.

Computing $\SIZE_{ij}(p^-,q^-,p^+,q^+,I)$ reduces to 
computing $\SUM_{ij}^{(0,<)}(p^+,q^+,I)$ and $\SUM_{ij}^{(0,\le)}(p^-,q^-,I)$
by subtraction.

Given $m=\MIDRANK_{ij}(p^-,q^-,p^+,q^+)$,
computing $\SUMP_{ij}(p^-,q^-,p^+,q^+,I)$ reduces to
computing $\SUM^{(s,<)}(p^+,q^+,I)$ and $\SUM^{(s,\le)}(p^-,q^-,I)$ 
by subtraction, for all $s\in\{0,1,2\}$, since $(x-m)^2$ expands to $x^2-2mx +m^2$.
\end{proof}

\subsection{Reducing the Counting Problem to Selected-Entries Matrix Multiplication}

We now solve this simplified counting problem using Lopsided \#AE Sparse Triangle,
by adapting previous reductions by Chan, Vassilevska Williams,
and Xu~\cite{ChanWX22}.
 
\begin{lemma}\label{lem:count}
Problem~\ref{prob:product2} can be solved in $\Tcount'(n,d,\delta)=\OO(d^\delta \Mselect{n,d^{12+2\delta},n}{n/d^{1+\delta}} + n^2/d^\delta)$ time.
\end{lemma}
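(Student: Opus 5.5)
We follow the proof of Lemma~\ref{lem:count} for Real APSP. The differences are that a single comparison now becomes up to three comparisons between sums of pairs, and that the indices $p,q$ are absorbed into the left and right vertices rather than being fixed.

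First we expand the quantity as a count. Write $\RANK_{ij}(k,\ell)=|\{(k',\ell')\in[d]^2: A[i,k']+B[j,\ell']\le A[i,k]+B[j,\ell]\}|$. Then $\SUM^{(s,\prec)}_{ij}(p,q,I)$ is the number of tuples $(k,\ell,k_1,\ell_1,\ldots,k_s,\ell_s)$ with $k\in I$ satisfying
\[
A[i,k]+B[j,\ell]\prec A[i,p]+B[j,q]
\]
and $A[i,k_r]+B[j,\ell_r]\le A[i,k]+B[j,\ell]$ for each $r\le s$. We treat $s=2$ only, since $s=0,1$ are easier.

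Next we apply Fredman's trick to each comparison. We sort the list
\[
\{A[i,k]-A[i,k']\}\cup\{B[j,\ell']-B[j,\ell]\}
\]
over all $i,j\in[n/d]$ and $k,k',\ell,\ell'\in[d]$. It has $O(nd)$ elements and sorts in $\OO(nd)$ time. Let $\alpha_{i,k,k'}$ and $\beta_{j,\ell,\ell'}$ be the ranks in this list. Each comparison becomes a comparison of ranks, with one side depending only on $i$ and $A$-indices and the other only on $j$ and $B$-indices. For example, $A[i,k_1]+B[j,\ell_1]\le A[i,k]+B[j,\ell]$ becomes $\alpha_{i,k_1,k}\le\beta_{j,\ell,\ell_1}$. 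The condition $\prec$ is handled with the same ``$+1$'' trick as before.

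Fix one of the $O(d^\delta)$ dyadic intervals $I$ and build a tripartite graph $G$ as follows.
\begin{itemize}
\item The left vertices are pairs $(i,p)\in[n/d]\times[d]$, and the right vertices are pairs $(j,q)\in[n/d]\times[d]$, so each side has $n$ vertices.
\item The middle vertices are tuples $(k,\ell,k_1,\ell_1,k_2,\ell_2,I_1,I_2,I_3)$ with $k\in I$, where $I_1,I_2,I_3$ are dyadic intervals in $[O(nd)]$, one per comparison.
\item The left vertex $(i,p)$ is joined to a middle vertex iff the three $\alpha$-ranks determined by $(i,p,k,k_1,k_2)$ lie in the left halves of $I_1,I_2,I_3$.
\item The right vertex $(j,q)$ is joined to a middle vertex iff the corresponding $\beta$-ranks (shifted by $1$ where needed) lie in the right halves of $I_1,I_2,I_3$.
\item The left vertex $(i,p)$ is joined to the right vertex $(j,q)$ iff $(p,q,I)\in Q_{ij}$.
\end{itemize}
By uniqueness of the separating dyadic interval, the number of triangles through the edge $((i,p),(j,q))$ equals the desired count. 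There are at most $n^2/d$ selected edges, which we split into batches if needed to match the $w$ parameter. Because a point lies in $O(\log n)$ dyadic intervals, the number of left--middle and middle--right edges is $\OO(n\cdot d^6)$.

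Finally we split by degree, as before. Set $\Delta=n/d^{6+2\delta}$ and call a middle vertex heavy if its degree exceeds $\Delta$. There are $\OO(nd^6/\Delta)=\OO(d^{12+2\delta})$ heavy middle vertices, and triangles through them are counted by one selected-entries call of the form $\Mselect{n,d^{12+2\delta},n}{\cdot}$. Triangles through light middle vertices are counted by brute force in $\OO(nd^6\cdot\Delta)=\OO(n^2/d^{2\delta})$ time. Summing over the $O(d^\delta)$ choices of $I$ gives the stated bound.

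The main obstacle is bookkeeping: each comparison must split cleanly into an $(i,p)$-side quantity and a $(j,q)$-side quantity. This includes the $\prec$ versus $\le$ distinction and the comparison against $(p,q)$ itself, which depends on $p$ and $q$; this dependence is exactly why $p$ and $q$ are folded into the left and right vertices. I would also double-check that the batch size matches the $w$ parameter written in the statement, and adjust the batching if necessary.
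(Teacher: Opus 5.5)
Your proposal is correct and follows the paper's proof essentially step for step. It uses the same expansion of $\SUM^{(2,\prec)}$ as a count over $I\times[d]^5$, the same Fredman-trick ranks, the same tripartite graph on $[n/d]\times[d]$ with three dyadic intervals per middle vertex, and the same threshold $\Delta=n/d^{6+2\delta}$. On your last concern, the paper uses batches of size $n^2/d^{1+\delta}$, so there are only $O(d^\delta)$ batches over all $I$ since $\sum_{i,j}|Q_{ij}|=O(n^2/d)$; the $w=n/d^{1+\delta}$ in the statement is a typo for $n^2/d^{1+\delta}$, as the proof and the subsequent theorem confirm.
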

\begin{proof}
Fix a dyadic interval $I$ of length at least $d^{1-\delta}$.
Let $W$ be a batch of up to $n^2/d^{1+\delta}$ pairs from
$\{(i,p,j,q)\in [n/d]\times [d]\times[n/d]\times[d]: (p,q,I)\in Q_{ij}\}$.

 We concentrate on the case $s=2$, since the cases $s=0$ and $s=1$ are easier.  W.l.o.g., say $\prec$ is $<$.  Then 
\begin{eqnarray*}
\lefteqn{c_{(i,p),(j,q)}\ :=\ \SUM^{(2,<)}_{ij}(p,q,I)}\\
& =& |\{(k,\ell,k',\ell',k'',\ell'')\in I\times [d]^5:  A[i,k]+B[j,\ell] < A[i,p]+B[j,q]\ \mbox{and}\\
&& \qquad A[i,k']+B[j,\ell']\le A[i,k]+B[j,\ell]\ \mbox{and}\ A[i,k'']+B[j,\ell'']\le A[i,k]+B[j,\ell]\}|.
\end{eqnarray*}

Sort $L=
\{A[i,k]-A[i,k'] : i\in [n/d],\ k,k'\in [d]\}\cup
\{B[j,\ell]-B[j,\ell']: j\in [n/d],\ \ell,\ell' \in[d]\}$, which has $O(dn)$ elements,
in $O(dn\log n)$ time.
Let $\alpha_{i,k,k'}$ denote the rank of $A[i,k]-A[i,k']$ and
$\beta_{j,\ell,\ell'}$ denote the rank of $B[j,\ell]-B[j,\ell']$ in the sorted list $L$.
By Fredman's trick,
\begin{eqnarray*}
c_{(i,p),(j,q)}
&=& |\{(k,\ell,k',\ell',k'',\ell'')\in I\times [d]^5:  \alpha_{i,k,p} < \beta_{j,q,\ell}\ \mbox{and}\ \alpha_{i,k',k} \le \beta_{j,\ell,\ell'}\ \mbox{and}\ \alpha_{i,k'',k} \le \beta_{j,\ell,\ell''}\}|.
\end{eqnarray*}

Define a tripartite graph $G$:
\begin{itemize}
\item The left vertex set is $[n/d]\times[d]$, the right vertex set is $[n/d]\times[d]$,
and the middle vertex set consists of tuples $(k,\ell,k',\ell',k'',\ell'',I_1,I_2,I_3)$ where $(k,\ell,k',\ell',k'',\ell')\in I\times [d]^5$ and
$I_1,I_2,I_3$ are dyadic intervals in $[O(dn)]$ (we do not list all middle vertices exhaustively but generate only those incident to edges).
\item There is an edge between the left vertex $(i,p)$
and the middle vertex $(k,\ell,k',\ell',k'',\ell'',I_1,I_2,I_3)$ iff
$\alpha_{i,k,p}$ is in the left half of $I_1$,
$\alpha_{i,k',k}$ is in the left half of $I_2$,
and $\alpha_{i,k'',k}$ is in the left half of $I_3$.
\item  There is an edge between the right vertex $(j,q)$
and the middle vertex $(k,\ell,k',\ell',k'',\ell'',I_1,I_2,I_3)$ iff
$\beta_{j,q,\ell}$ is in the right half of $I_1$,
$\beta_{j,\ell,\ell'}+1$ is in the right half of $I_2$,
and $\beta_{j,\ell,\ell''}+1$ is in the right half of $I_3$.
\item There is an edge between the left vertex $(i,p)$ and the
right vertex $(j,q)$ iff $(i,p,j,q)\in W$.
\end{itemize}
Since any point belongs to at most $O(\log n)$ dyadic intervals,
there are a total of $O((n/d)\cdot d\cdot  d^6\log^3n)=\OO(d^6n)$ edges between the left part and
the middle part, and between the right part and the middle part.
Let $V_H$ denote the middle vertices of degree more than $\Delta := n/d^{6+2\delta}$
is $\OO(d^6n/\Delta)=\OO(d^{12+2\delta})$.
Let $c^H_{(i,p),(j,q)}$ be the number of triangles containing $(i,p)$ and $(j,q)$
through vertices in $V_H$.
Let $c^L_{(i,p),(j,q)}$ be the number of triangles containing $(i,p)$ and $(j,q)$
not through vertices in $V_H$.
We compute $c^H_{(i,p),(j,q)}$ for all $(i,p,j,q)\in W$ in $O(\Mselect{n,d^{12+2\delta},n}{n^2/d^{1+\delta}})$
time.  We compute $c^L_{(i,p),(j,q)}$ for all $(i,p,j,q)\in W$ by brute force
in $\OO(d^6n\cdot \Delta) = \OO(n^2/d^{2\delta})$ time.
Then $c_{(i,p),(j,q)}=c^L_{(i,p),(j,q)}+c^H_{(i,p),(j,q)}$.

We repeat the above for all $O(d^\delta)$ choices of $I$, and
a total of $O(d^{\delta})$ batches for $W$.
The total number of iterations is $O(d^{\delta})$.
\end{proof}

Combining all the preceding lemmas, we conclude:
\begin{theorem}
For any constant $\delta>0$, Real 3SUM can be solved in $\OO(d^{\delta} \Mselect{n,d^{12+2\delta},n}{n^2/d^{1+\delta}} + n^2/d^\delta)$ deterministic time.
\end{theorem}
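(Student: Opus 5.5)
The proof is a composition of the reductions just established. I will trace a Real 3SUM instance down to Lopsided \#AE Sparse Triangle and add up the costs.

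\textbf{Step 1: from Real 3SUM to Problem~\ref{prob:3sum1}.} Sort $A$ and $B$ in $O(n\log n)$ time and cut them into blocks of size $d$. Apply the deterministic staircase reduction of Chan, Vassilevska Williams, and Xu (after Gr\o nlund--Pettie and Chan). It is a monotone walk in the $[n/d]\times[n/d]$ grid, so it involves no randomness. For each $c\in C$ it produces $O(n/d)$ pairs $(i,j)$, so $\sum_{i,j}|C_{ij}|=O(n^2/d)$. From the predecessors and successors of each $c$ within the blocks $A_i+B_j$ along its staircase, we read off its predecessor and successor in $A+B$, and hence whether $c\in A+B$.

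\textbf{Step 2: from Problem~\ref{prob:3sum1} to counting.} Initialize each query $c\in C_{ij}$ with a bracketing quadruple of rank difference $O(d^2)$. If $c$ lies outside the range of $A_i+B_j$, there is a trivial answer, which we handle separately.

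\begin{itemize}
\item The preceding lemma reduces Problem~\ref{prob:3sum1} to $O(\log d)$ rounds of Problem~\ref{prob:shrink}. Each round shrinks the rank gap by a factor of $0.8$ until the gap is $1$, and gap $1$ means we have exactly the predecessor and successor.
\item The next lemma solves Problem~\ref{prob:shrink} in $O(\Tcount\log d+n^2/d^\delta)$ time.
\item The simplification lemma gives $\Tcount=O(\Tcount')$.
\item The last lemma gives $\Tcount'=\OO(d^\delta \Mselect{n,d^{12+2\delta},n}{n^2/d^{1+\delta}}+n^2/d^\delta)$. Note that its statement writes the selected-entries count as $n/d^{1+\delta}$, but its proof uses batches of size $n^2/d^{1+\delta}$, which matches the theorem.
\end{itemize}

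\textbf{Step 3: total cost.} Multiplying these bounds, the total is $O(\log^2 d)$ times
\[
d^\delta\Mselect{n,d^{12+2\delta},n}{n^2/d^{1+\delta}}+n^2/d^\delta,
\]
plus $O(n\log n)$ for sorting. The polylogarithmic factors are absorbed into $\OO$.

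\textbf{Main obstacle: bookkeeping rather than new ideas.} Three points need care. First, the counting lemmas must be applied to all quintuples simultaneously in each round, with batch sizes matching the $O(n^2/d)$ total query count. Second, the brute-force finishing step in the halving lemma, done by binary search over $\ell$ for each $k\in I$, must stay within $\OO(n^2/d^\delta)$ per round. Third, every step must be verified deterministic: the sorting, Fredman's trick, the high/low-degree split, and the selected-entries subroutine of~\cite{AlmanV26}. Degeneracies are handled by symbolic perturbation.
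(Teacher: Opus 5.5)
Your proposal is correct and takes the same route as the paper, whose proof is simply the composition you describe. The chain is: the deterministic staircase reduction to Problem~\ref{prob:3sum1}; $O(\log d)$ rounds of approximate halving; the dyadic-interval search on average potential, with the $\widetilde{O}(n^2/d^\delta)$ brute-force finish; the reduction to the sum-rank-power counts; and the Fredman-trick high/low-degree reduction to selected-entries matrix multiplication. Together these give $O(\log^2 d)$ times the stated bound. You are also right that the last lemma's statement should read $n^2/d^{1+\delta}$ selected entries, which is what its proof and this theorem use.
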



For example, using the fact~\cite[Corollary 31 with $c=21$ and $\theta=0.0145$]{AlmanV26} that
$\Mselect{n,d,n}{w} \le 
\OO(n^2/d^{0.008352} + wd^{0.077647})$ for $d\le n^{0.056853}$,
the above bound becomes
$\OO(d^\delta \cdot (n^2/d^{(12+2\delta)0.008352} + (n^2/d^{1+\delta})d^{(12+2\delta)0.077647})
+ n^2/d^\delta)$ for $d\le n^{0.056853/(12+2\delta)}$.
Setting $\delta={0.050534}$ and $d=n^{0.0046983}$
yields the following (we did not try to optimize the exponent):







\begin{corollary}
Real 3SUM can be solved in $O(n^{1.9998})$ deterministic time.
\end{corollary}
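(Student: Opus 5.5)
The corollary follows by plugging the concrete bound of Alman and Vassilevska Williams into the preceding theorem, which gives $\OO(d^{\delta} \Mselect{n,d^{12+2\delta},n}{n^2/d^{1+\delta}} + n^2/d^\delta)$ deterministic time for Real 3SUM. That theorem is the composition of three steps: the reduction of Problem~\ref{prob:3sum1} to approximate halving with an $O(\log d)$ overhead; the reduction of approximate halving to Problem~\ref{prob:product} and then to Problem~\ref{prob:product2}; and the second Lemma~\ref{lem:count}. I take all of these as given. What remains is to instantiate $\Mselect{n,D,n}{w}\le \OO(n^2/D^{0.008352} + wD^{0.077647})$, valid for $D\le n^{0.056853}$, with $D=d^{12+2\delta}$ and $w=n^2/d^{1+\delta}$. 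The constraint on $D$ then becomes $d\le n^{0.056853/(12+2\delta)}$.

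After substitution the running time is a sum of three terms:
\[ \OO\left(n^2 d^{\delta-(12+2\delta)0.008352} + n^2 d^{(12+2\delta)0.077647-1} + n^2 d^{-\delta}\right). \]
With $\delta=0.050534$ I would check each exponent of $d$ numerically.

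For the first term, $(12+2\delta)\cdot0.008352\approx 0.10107$, so the exponent is $\delta-0.10107\approx -0.05054$.

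For the second term, $(12+2\delta)\cdot0.077647\approx 0.93961$. I then need $\delta+0.93961-1\le -\delta$, that is, $2\delta\le 0.06039$. This fails for $\delta=0.0505$, which means the exponent as I have written it is not negative enough. I should therefore recheck the batching accounting. In Lemma~\ref{lem:count} the number of iterations is $d^{\delta}$ choices of $I$ times $d^{\delta}$ batches of $W$. The batch size is $n^2/d^{1+\delta}$ out of a total of $n^2/d$ pairs, so $d^{\delta}$ batches suffice, and each costs $\Mselect{n/d\cdot d,\ldots}$. Redoing the term carefully, it is
\[ d^{2\delta}\cdot (n^2/d^{1+\delta})\, d^{0.93961} = n^2 d^{\delta-0.06039}, \]
whose exponent is $\approx -0.00986$.

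The first term, computed the same way with the $d^{2\delta}$ multiplier, has exponent $2\delta-0.10107\approx 0$. This is exactly the delicate point. The intended accounting (matching the theorem as stated, which uses a $d^{\delta}$ multiplier) gives exponents $-0.0505$, $-0.0099$ and $-0.0505$, so the bottleneck is the second term, $n^2 d^{-0.00986}$.

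Finally, with $d=n^{0.0046983}$ I would confirm the two remaining conditions. First, $0.0046983\le 0.056853/12.101\approx 0.0046982$, which holds up to rounding, so $d$ is essentially chosen at the boundary of the validity range. Second, the overall exponent is $2-0.0046983\cdot 0.00986\approx 2-0.0000463$. This is below $1.99996$, hence $O(n^{1.9998})$ with room to spare for the polylogarithmic factors.

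The main obstacle is the bookkeeping. I need to make sure the multiplier in front of $\mathsf M$ is the theorem's $d^{\delta}$ rather than a $d^{2\delta}$ that could arise from the batch count, and that the validity constraint on $D$ holds at the chosen $d$. If either fails, I would simply shrink $d$ slightly, for example to $n^{0.0046}$, which still yields a strictly subquadratic exponent well within $1.9998$.
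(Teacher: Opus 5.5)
Your plan is the paper's: substitute the Alman--Vassilevska Williams bound into the preceding theorem with $D=d^{12+2\delta}$ and $w=n^2/d^{1+\delta}$. The numerical verification, however, goes wrong, and the conclusion you reach does not follow.

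Your three-term display is correct. The $d^{\delta}$ multiplier cancels the $\delta$ in $w$, so the second term is $n^2d^{(12+2\delta)0.077647-1}\approx n^2d^{-0.0604}$. That is already smaller than $n^2/d^{\delta}$. Your test ``$\delta+0.93961-1\le-\delta$'' inserts a spurious extra $\delta$, and in fact the second term is never the bottleneck.

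The worry about a $d^{2\delta}$ multiplier is also unfounded. Since $\sum_{i,j}|Q_{ij}|=O(n^2/d)$ in total, summing over all $O(d^{\delta})$ dyadic intervals $I$ gives at most $O(d^{\delta})$ full batches of size $n^2/d^{1+\delta}$, plus one partial batch per interval. That is $O(d^{\delta})$ iterations overall, as the lemma states. Had the multiplier really been $d^{2\delta}$, the \emph{first} term would have exponent $2\delta-(12+2\delta)0.008352\approx 0$, and the chosen $\delta$ would give no saving at all. Your ``intended accounting'' with exponents $-0.0505,-0.0099,-0.0505$ mixes the two multipliers inconsistently.

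The decisive error is the last step. You get $2-0.0046983\cdot 0.00986\approx 1.99995$, which is \emph{larger} than $1.9998$. So ``below $1.99996$, hence $O(n^{1.9998})$'' is a non sequitur. Your fallback of shrinking $d$ only reduces the saving, so it cannot rescue the bound either.

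With the correct bookkeeping, the choice $\delta=0.050534$ balances the first and third terms, because $(12+2\delta)\cdot 0.008352\approx 2\delta$. The bottleneck is then $n^2/d^{\delta}$. Taking $d=n^{0.0046983}$ gives exponent $2-0.0046983\cdot 0.050534\approx 1.99976<1.9998$, which leaves room for the polylogarithmic factors.

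One minor point: the validity condition $0.0046983\le 0.056853/(12+2\delta)\approx 0.00469818$ is actually violated by about $10^{-7}$. Taking $d=n^{0.004698}$ fixes this without changing the conclusion.
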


The only primitive  operations used on the input real numbers are comparisons of sums of pairs with a third number, and comparisons of sums/differences of two pairs.

We can also obtain a truly subcubic deterministic algorithm for Real Exact Triangle by using the same strategy, in combination with the previous reduction~\cite{ChanWX22} (in fact, Exact Triangle is a little easier to work with than 3SUM).

\subsection*{AI Disclosure}

The results in 
this paper were obtained using ChatGPT Sol 6.1.  
Initial proofs 
were complicated and difficult to digest. 
This manuscript is completely human-written: we (the authors) rewrote the proofs from scratch, simplified many parts, and presented the solutions in our own way.
We have verified the correctness and originality of all content, including references.

\small
\bibliographystyle{alphaurl}
\bibliography{ref}

\end{document}